\documentclass{lipics-v2021}

\usepackage{xspace}

\usepackage{thmtools}
\usepackage{amsmath, amsfonts, amssymb}
\usepackage{amsthm}
\usepackage{tikz}
\usepackage{mathdots}
\usepackage{cancel}
\usepackage{color}
\usepackage{url}
\usepackage{hyperref}
\hypersetup{
    colorlinks=true,
    linkcolor=black,
    citecolor=black,
    filecolor=black,
    urlcolor=black,
}
\usepackage[capitalize]{cleveref}
\crefname{step}{step}{steps}
\Crefname{step}{Step}{Steps}

\usepackage{siunitx}
\usepackage{array}
\usepackage{multirow}
\usepackage{gensymb}
\usepackage{tabularx}
\usepackage{extarrows}
\usepackage{booktabs}
\usepackage{subcaption}
\usepackage{todonotes}
\usetikzlibrary{fadings}
\usetikzlibrary{patterns}
\usetikzlibrary{shadows.blur}
\usetikzlibrary{shapes}
\usepackage[skins]{tcolorbox}
\usepackage{enumerate}
\usepackage{cleveref}
\usepackage{algorithm}
\usepackage{algpseudocode}
\usepackage{xcolor} 
\usepackage{tcolorbox}

\usepackage{thm-restate}

\newcommand{\N}{\mathbb{N}}

\newcommand{\id}{\textsc{id}}

\newcommand{\LOCAL}{\textsf{LOCAL}}

\newtcolorbox{algorithmbox}[2][]{%
	enhanced,colback=white,colframe=black,coltitle=black,
	sharp corners,boxrule=0.4pt,
	fonttitle=\itshape,
	attach boxed title to top left={yshift=-0.3\baselineskip-0.4pt,xshift=2mm},
	boxed title style={tile,size=minimal,left=0.5mm,right=0.5mm,
		colback=white,before upper=\strut},
	title=#2,#1
}

\definecolor{lightred}{rgb}{1,0.6,0.6}

\newtcolorbox{box1}[1][]{colback=gray!20, colframe=black, coltext=black, boxrule=1pt, sharp corners, #1}

\newcommand{\problem}{Increasing Degree problem\xspace}

\newcommand{\problemp}{Partition problem\xspace}

\usepackage{nicefrac}
\usepackage{todonotes}

\newcommand{\fil}[1]{\todo[color=blue!20]{\tiny F: #1}}

\newcommand{\ldeg}{\sharp\mbox{leaf}}
\usepackage{xcolor}
\usepackage[draft]{fixme}
\usepackage{algorithm}
\usepackage{algpseudocode}

\fxusetheme{color}
\fxsetface{inline}{\small}

\renewcommand{\AA}{\mathcal{A}}

\newcommand{\LL}{\mathcal{L}}

\newcommand{\TT}{\mathcal{T}}

\newcommand{\logstar}{\log^{\star}}

\begin{document}

\title{A Simple Construction of Locally Checkable Problems Filling the LOCAL Complexity Gaps in Graphs with Arbitrary Large Degrees}

\titlerunning{A Simple Construction of Locally Checkable Problems Filling the Complexity Gaps}

\author{Filippo Casagrande}
{Gran Sasso Science Institute, Italy}
{}
{}
{}

\author{Pierre Fraigniaud}
{Institut de Recherche en Informatique Fondamentale (IRIF)\\
CNRS and Université Paris Cité, France}
{}
{}
{}

\author{Benjamin Jauregui}
{- Institut de Recherche en Informatique Fondamentale,
CNRS and Université Paris Cité, France\\
- Departamento de Ingeniería Matemática, Universidad de Chile, Chile}
{}
{}
{}

\author{Mikaël Rabie}
{Institut de Recherche en Informatique Fondamentale  (IRIF)\\ CNRS and Université Paris Cité, France}
{}
{}
{}

\authorrunning{F. Casagrande, P. Fraigniaud, B. Jauregui, M. Rabie }

\keywords{LOCAL Model, Increasing Degree Problem, Complexity Gaps.}

\ccsdesc{Theory of computation~Distributed algorithms}

\funding{The second and fourth authors are supported by ANR Project ENEDISC (ANR-24-CE48-7768-01).}

\hideLIPIcs

\nolinenumbers

\acknowledgements{The authors are thankful to Alkida Balliu, Francesco d'Amore and Dennis Olivetti for their precious comments and advices, and to Ami Paz for preliminary discussions on the topic of this paper.}

\maketitle

\begin{abstract}
    
    We show that the complexity gaps in the round complexities of \emph{locally checkable labeling} (LCL) problems are not due to the fact that solutions to LCL problems must be locally checkable, but solely to the fact that LCL problems are defined only for graphs of maximum degree upper bounded by some arbitrary yet constant value~$\Delta$. Specifically, we show that there are infinitely many locally checkable problems (i.e., problems whose solutions can be checked locally) whose round complexities belongs to the two intervals $[\omega(1),o(\log\log^\star n)]$ and $[\omega(\log^\star n),o(\log n)]$ whenever these problems are considered in networks with unbounded maximum degrees. This extends the previous results by Schmid (arXiv, 2026), which hold for the polynomial regime only, and by Bousquet, Feuilloley, and Pierron (OPODIS, 2025), which hold for trees only. 
        
    
    All our upper bounds are obtained using deterministic algorithms that can be run under the \emph{port-numbering} model, which is a weak variant of \LOCAL, without any a priori information on the number of nodes in the network. Instead, our lower bounds apply to randomized \LOCAL, and quantum \LOCAL, even if nodes have identifiers in $[1,n]$, and even if they know the exact number of nodes in the network. They even hold under randomized online \LOCAL, a strong variant of the \LOCAL\ model. 
    Finally, our lower bounds  hold even for trees. 
    

    Our results are obtained using two main ingredients. The first one is the analysis of a new locally checkable problem called \emph{Increasing Degree}, parameterized by a function $f:\mathbb{N}\to\mathbb{N}$. Different round complexities can be obtained by tuning the function $f$ accordingly. Our second tool is a general \emph{Translation Theorem} that enables to transfer results from a given range of complexities to results for a  range of lower complexities. 
    
\end{abstract}

\thispagestyle{empty}
\newpage 
\setcounter{page}{1}

\section{Introduction}

\subsection{Locality and LCL Problems}

The study of \emph{spatial locality} in distributed computing has been mostly carried out under the so-called \LOCAL\ model~\cite{linial1992locality,peleg2000distributed}. Recall that this model assumes a set of $n\geq 1$ processing elements connected by a network modeled as a simple connected $n$-node graph $G=(V,E)$. Each node $v\in V$ models a processing element provided with an identifier $\id(v)\in \mathbb{N}$, which is unique in the network. It is often assumed that the identifiers belong to an interval $[0,N]$ where $N=\mbox{poly}(n)$ is a polynomial upper bound on the number of nodes $n$ in the network. The bound $N$ may be given to the nodes as an initial knowledge. Other than that, the nodes do not have information about the graph they belong to. Nodes exchange messages along the edges of the graph, which model communication links. Computation proceeds as a sequence of synchronous rounds, all nodes starting at the same round. In each round, every node performs some local computation, then sends a message to each of its neighbors in the network, and receives the messages sent by its neighbors. 

Under the \LOCAL\ model, particular attention has been paid to what is known as \emph{locally checkable labeling} (LCL) problems~\cite{naor1993can}. The appeal of this model lies in two features. First, it provides \emph{finite descriptions} for  a vast family of problems. Second, the global correctness of a solution to such a problem is the conjunction of \emph{local correctness} conditions at all nodes. That is, solutions to LCL problems are easy to check, and the question is: Are they easy to compute? 

More formally, an LCL is described by a triple $(\Delta,L,\mathcal{B})$ where $\Delta$ is a positive integer, $L$ is a finite set whose elements are called \emph{labels}, and $\mathcal{B}$ is a finite collection of \emph{$L$-labeled balls} of radius $r$ in graphs of maximum degree~$\Delta$, for some integer $r\geq 0$. Recall that the ball $B_G(v,r)$ centered at $v$ and of radius $r$ in a graph $G=(V,E)$ is the subgraph induced by all the nodes at distance at most $r$ from $v\in V$. A ball is $L$-labeled if each node $w$ of the ball is labeled by a label $\ell(w)\in L$. Given an LCL $(\Delta,L,\mathcal{B})$, a distributed algorithm $\mathsf{ALG}$ solves the corresponding problem $\Pi(\Delta,L,\mathcal{B})$ if, given any graph $G=(V,E)$ of maximum degree~$\Delta$, $\mathsf{ALG}$ outputs a label in $L$ at each node such that, for every $v\in V$, the labeled ball $B_G(v,r)\in\mathcal{B}$. 

Note that the specification of LCL problems is very general and enables to capture many classical graph problems such as proper vertex- or edge-coloring, maximal independent set, maximal matching, etc. Note also that one may assume  input labels given to the nodes, in which case the correctness of an output labeling may depend on the given input labeling. A crucial point is that an algorithm $\mathsf{ALG}$ aiming at solving an LCL problem $\Pi$ is designed for running specifically in graphs of maximum degree~$\Delta$, where $\Delta$ is a fixed constant, as $\mathsf{ALG}$ is aiming at solving~$\Pi$, which is specified only for graphs with a certain maximum degree~$\Delta$. In the \LOCAL\ model, the \emph{round complexity} of  $\mathsf{ALG}$ is thus the maximum, taken over all $n$-node graphs $G$ of maximum degree~$\Delta$, of the number of rounds executed by $\mathsf{ALG}$ until all nodes of $G$ output, and the complexity of an LCL problem $\Pi$ is the minimum, taken over all algorithms $\mathsf{ALG}$ solving $\Pi$, of the round complexity of $\mathsf{ALG}$. 

Very many interesting facts can be stated for LCL problems, including the fact that it is undecidable whether they can be solved in a constant number of rounds~\cite{naor1993can}, and the fact that the randomized complexity of an LCL problem on instances of size $n$ is at least the deterministic complexity of the problem on instances of size $\Theta(\sqrt{\log n})$~\cite{chang2019exponential}, which is essentially tight up to polylogarithmic factors. (This contrasts with the open question  $\mathsf{P}$ vs. $\mathsf{BPP}$ in sequential computing.) Yet, one of the most striking fact about solving LCL problems in the \LOCAL\ model is the presence of wide \emph{complexity gaps}, contrasting with the Time Hierarchy Theorem for Turing machines. 

\subsection{Complexity Gaps}

It is known~\cite{chang2019time} that no LCL problem has deterministic round complexity in the interval $[\omega(\log^\star n),o(\log n)]$ where $\log^\star n$ denotes the number of times the logarithm function should be applied to $n$ until the result is at most~1. 
In other words, if an LCL problem has complexity $o(\log n)$, then it has complexity $O(\log^\star n)$. 
The fact that no LCL problem can have complexity strictly between $\Theta(\log^\star n)$ and $\Theta(\log n)$ came as a stunning surprise, but there is a clear explanation for this. 
The idea is to \emph{lie about $n$}: many algorithms have a round complexity that depends on $n$ only because identifiers range over $\mathrm{poly}(n)$. 
If we could instead give a \emph{virtual identifiers} to every node from a range depending only on $\Delta$, and run the algorithm on these, its complexity would no longer depend on $n$. Of course, in this way, many nodes will share the same virtual identifier.
Coloring is used to avoid collisions while running the algorithm with this small amount of new identifiers: the algorithm do not have to find multiple nodes with the same identifier. 
There is a deterministic algorithm  with round complexity $O(\log^\star n)$ that solves proper $O(\Delta^2)$-coloring in graphs of maximum degree~$\Delta$ (see~\cite{linial1992locality}). 
Applying it to the $k$th power graph $G^k$ of $G$, where $k$ depends on the LCL problem at hand, this gives a proper coloring of $G$ where two nodes that share the same color are far away (with distance at least $k$). 
This enables us to run the algorithm using the colors as virtual identifiers instead of the original ones, and still producing a correct output, because the global correctness is the conjunction of local correctness conditions.

Interestingly, another complexity gap pops up for lower complexities~\cite{chang2019time,naor1993can}, between $\Theta(\log\log^\star n)$ and $O(1)$. That is, if an LCL problem has deterministic complexity $o(\log\log^\star n)$, then it has constant deterministic complexity. Again, the fact that the considered graphs have maximum degree~$\Delta$ plays an important role here. Roughly, Ramsey theory enables us to show that if there is an algorithm with round-complexity $o(\log\log^\star n)$ solving an LCL problem $\Pi$, then there is an \emph{order invariant} algorithm (i.e., an algorithm using only the relative order of the identifiers, not their actual values) solving $\Pi$ in $o(\log\log^\star n)$ rounds. 
The result essentially follows from the fact that any $o(\log \log^* n)$-round algorithm can be transformed into one whose behavior depends only on the relative order of node identifiers, rather than their numerical values. This makes it possible to iteratively compress the identifier space down to $O(1)$ while preserving the algorithm's behavior.

Furthermore, similar complexity gaps occur for \emph{randomized} algorithms, which are Monte Carlo algorithms succeeding with high probability, i.e., probablility at least $1-O(1/n)$.  For instance, no LCL problem has randomized round complexity in the range $[\omega(1),o(\log\log^\star n)]$, or in the range $[\omega(\log^\star n),o(\log\log n)]$~\cite{suomela2020landscape}. 
The question addressed in this paper is the following: 
\begin{center}
    \emph{Do these complexity gaps persist if we relax the restriction that \\ we must work only with graphs of bounded degree?} 
\end{center}
It is worth mentioning three recent contributions that provide partial (positive and negative) answers to this question. 
\begin{itemize}
    \item First, it was proved by Schmid~\cite{schmid2026lcls} that all round complexities of the form $n^\epsilon$ for $\epsilon\in(0,1]$ exist whenever the degrees are unbounded. However, it is not known whether the complexity gaps also disappear below the polynomial regime, in particular for the two gaps $[\omega(1),o(\log\log^\star n)]$ and $[\omega(\log^\star n),o(\log n)]$. 
    
    \item Second, it was shown by Bousquet, Feuilloley and Pierron~\cite{bousquet2024local} that the answer is also negative if one restricts our analysis to trees. LCL problems in \emph{trees} (of bounded maximum degree~$\Delta$) can have only the following deterministic complexity: $O(1)$, $\Theta(\log^\star n)$, $\Theta(\log n)$, and $\Theta(n^{1/k})$ for all integers $k\geq 1$~\cite{grunau2022landscape}. Bousquet et al.~\cite{bousquet2024local} have proved that removing the constraints on the maximum degree enables to define locally checkable problems with \emph{arbitrary} round-complexities in trees. 

    \item In contrast to the above two results, it was very recently shown by Piombi~\cite{Piombi2026} that the complexity gaps persist even for \emph{rooted} trees of unbounded degrees, whenever the locally checkable problems are defined by local Presburger monadic second-order formulas. 
\end{itemize}
The objective of this paper is to study whether the result in~\cite{bousquet2024local} extends from trees to arbitrary graphs, beyond the case of the polynomial regime studied in~\cite{schmid2026lcls}, or whether the results in~\cite{Piombi2026} extends beyond local Presburger monadic second-order formulas.

\subsection{Our Results}

We focus on \emph{locally checkable problems}, which are defined roughly as LCL problems, but where the requirement of working with graphs of maximum degree $\Delta$ is dropped. We however keep the requirement that the global correctness of a solution is defined as the conjunction of a local correctness condition at all nodes. That is, we consider problems defined by a pair $(L,\varphi)$ where $L$ is a finite set of labels, and $\varphi$ is a Boolean predicate applying to $L$-labeled balls of radius $r$ in arbitrary graphs, for some integer~$r\geq 0$. Solving such a problem asks for labeling every node with a label in $L$ such that the $L$-labeled ball of radius $r$ centered at every node of the input graph satisfies~$\varphi$. We do not enter into the details of whether the predicate $\varphi$ is expressible in a certain logic (e.g., first-order logic) as we shall explicitly work with specific problems for which the predicate $\varphi$ is explicit, which suffices for the purpose of this paper.

In short, the take away message of this paper is that the aforementioned complexity intervals  $[\omega(1),o(\log\log^\star n)]$ and $[\omega(\log^\star n),o(\log n)]$  are far from being empty whenever the bounded degree condition is dropped. More specifically, we establish the following set of results. 

\begin{itemize}
    \item For the interval $[\omega(\log^\star n),o(\log n)]$, we describe problems $\Pi_f$ parameterized by functions ${f:\mathbb{N}\to\mathbb{N}}$, and we show how to choose $f$ so that to produce various locally checkable problems with many different complexities in $[\omega(\log^\star n),o(\log n)]$. In particular, we construct problems with complexities $\Theta(\log^{(k)}n)$ for all $k\geq 1$, where $\log^{(k)}$ denotes the $k$th iterated logarithm\footnote{For every integer $k\geq 1$, the  $k$th iterate of a function $f$ is the function $f^{(k)}$ defined as $f^{(1)}=f$, and $f^{(k)}=f\circ f^{(k-1)}$ for $k\geq 2$ (e.g., $\log^{(2)}n=\log\log n$).}, and problems with complexities $\Theta(\log^{1-\epsilon}n)$ for all $\epsilon\in(0,1)$. Interestingly, we also provide a locally checkable problem with complexity $\Theta(\log^\star n)$ for which there exists an optimal algorithm that does not use identifiers. This is in contrast to LCL problems with complexity $\Theta(\log^\star n)$ whose solutions typically heavily use the node identifiers. To our knowledge, this is the first problem with complexity $\Theta(\log^\star n)$ in the \emph{port-numbering} variant of the \LOCAL\ model~\cite{angluin1980local}, in which nodes do not have identifiers, but the incident edges of every degree-$d$ node are labeled from 1 to $d$ in arbitrary order.

    \item For dealing with the interval $[\omega(1),o(\log\log^\star n)]$, we construct a ``translation operator'' $\Phi$ that enables us to transform any problem $\Pi_f$ with complexity in $[\Omega(\log^\star n),O(\log n)]$ into a  problem $\Pi_{\Phi(f)}$ with complexity in ${[\Omega(\log^\star\log^\star n), O(\log \log^\star n)]}$. The interest of this operator is that it can be iterated to obtain a problem $\Pi_{\Phi(\Phi(f))}$ with complexity in ${[\Omega(\log^\star\log^\star\log^\star n),O(\log\log^\star\log^\star n)]}$, and so on. Since the function $\log^\star$ is successively applied in this construction, all the resulting complexities remain in $\omega(\log^\sharp n)$, where $\log^\sharp n$ denotes the number of times the function $\log^\star$ must be applied to $n$ so that to obtain a value at most 1. 
    Nevertheless, we also design a problem with round complexity $\Theta(\log^\sharp n)$. 
\end{itemize}

\noindent Figures~\ref{fig:comp_functions} and~\ref{fig:comp_functions2} provides a compact representation of our results. 
It is worth pointing out the following facts resulting from our analysis. 

\begin{description}
    \item[Robust upper and lower bounds.] Our lower bounds hold even for trees, while our upper bounds hold for arbitrary graphs. Moreover, our lower bounds hold even if the nodes are given identifiers in $[0,n-1]$, and even if $n$ is given to the nodes as initial knowledge. Instead, our upper bounds do not assume any given bound on the number of nodes, and they hold even in the absence of identifiers. That is, our upper bounds hold in the aforementioned port-numbering variant of the \LOCAL\ model~\cite{angluin1980local}. In particular, we provide a problem with complexity $\Theta(\log^*n)$ that does not use identifiers nor the knowledge of the size of the network to be solved, which was, to our knowledge, not known before.

    \item[Randomized complexity, and beyond.]
    Our upper bounds are obtained using deterministic algorithms, but our matching lower bounds also hold for randomized algorithms. In fact, they also hold for \emph{non-signaling finitely-dependent} probability distributions, which implies that they hold for \emph{quantum} algorithms too, i.e., for algorithms where processing nodes can prepare quantum states, perform quantum operations on these states, and exchange qubits with their neighbors via the communication links of the network. Finally, our lower bounds also hold in the randomized online variant of the \LOCAL\ model, one of the strongest variants of \LOCAL. 
\end{description}

To sum up the above remarks, our upper bounds use deterministic algorithms, and they apply to all graphs under the port-numbering variant of the \LOCAL\ model in absence of any information about the input graph given to the nodes. In contrast, our matching lower bounds hold for the quantum variant of the \LOCAL\ model, as well as for the randomized online variant \LOCAL, even in trees, and even if the nodes are given identifiers in $[0,n-1]$ in $n$-node trees. 

\begin{figure}[tb]
    \centering
    \scalebox{0.8}{\input{fig/line_functions}}
    \caption{New complexities obtained directly by analyzing the complexity of specifically designed problems, tuned thanks to appropriate choices of the parameter function~$f$. }
    \label{fig:comp_functions}
\end{figure}

\begin{figure}[tb]
    \centering
    \scalebox{0.8}{\input{fig/line_translator}}
    \caption{Complexities obtained by application of our general translation operator.}
    \label{fig:comp_functions2}
\end{figure}

\subsection{Our Techniques}

It may seem at a first glance that  locally checkable problems with complexities in the empty gaps of the complexity landscape of LCL problems can easily be obtained from LCL problems whose complexities are functions of $\Delta$ and~$n$. One typical example is \emph{sinkless orientation}, whose complexity is known to be $\Theta(\log_\Delta n)$~\cite{brandt2016lower,ghaffari2017sinkless}. 
One can define a problem, called \emph{threshold sinkless orientation}, with complexity $t(n)$ by setting a threshold degree $\Delta_0=n^{O(1/t(n))}$. This problem consists of solving sinkless orientation whenever the graph has maximum degree $\Delta\leq \Delta_0$, and doing nothing otherwise. However, this approach suffers from several weaknesses. 
\begin{itemize}
    \item First, for solving the problem, the nodes need to be given the maximum degree $\Delta$ of the input graph, whereas the \LOCAL\ model assumes that the sole global information given to the nodes is a polynomial upper bound on the number of nodes. 

    \item Second, the definition of the problem depends on the number of nodes in the input network, whereas it is desirable that the problem definition (whether it be LCL problems or locally checkable problems) be independent of the size of the input.

    \item Last, but not least, the lower bound for the chosen problem (e.g., sinkless orientation) generally holds only for deterministic algorithms, sometimes to randomized algorithms, but not to quantum algorithms (not to speak about non-signaling  finitely-dependent distribution). 
\end{itemize}
Instead, the approach used in this paper yields problems whose quantum complexities can be lower bounded, which allows us not only to fill up complexity gaps for deterministic algorithms, but also for randomized algorithms. 

The following problem, called \emph{Increasing Degree}, and denoted by $\Pi_f$, plays a major role for obtaining our results. It is parameterized by a function $f:\mathbb{N}\to\mathbb{N}$ satisfying $f(x)>x$ for every $x\in\mathbb{N}$. 
Recall that, in a simple graph $G=(V,E)$, a \emph{leaf}  is a node of degree~1, i.e., a node $v$ such that $\deg(v)=1$, where $\deg(v)=|N(v)|$ with $N(v)=\{w\in V\mid \{v,w\}\in E\}$. 
We denote by $\ldeg(v)$ the number of leaves adjacent to~$v$. 

\begin{definition}
Given $f:\mathbb{N}\to\mathbb{N}$, we say that a node $v$ of $G$ is \emph{balanced} (with respect to~$f$) if it is a leaf, or all its neighbors are adjacent to less than  $f(\ldeg(v))$ leaves, i.e., if
\[
\big(\deg(v)=1\big)\lor \big(\forall w\in N(v), \; \ldeg(w)< f(\ldeg(v))\big). 
\]
A node $v$ which is not balanced is \emph{unbalanced}. For every unbalanced node~$v$, every neighbor ${w\in N(v)}$ satisfying  $\ldeg(w)\geq  f(\ldeg(v))$ is said to be a \emph{heavy} of~$v$ (with respect to~$f$). 
\end{definition}

\begin{figure}[tb]
    \centering
\begin{tikzpicture}[square/.style={regular polygon,regular polygon sides=4}]
   \tikzstyle{circlenode}=[draw,circle,minimum size=70pt,inner sep=0pt]
    \tikzstyle{whitenode}=[draw,circle,fill=white,minimum size=20pt,inner sep=0pt]
    \tikzstyle{whitebox}=[draw,square,fill=white,minimum size=15pt,inner sep=0pt]
    \tikzstyle{blacknode}=[draw=black,circle=black,fill=black,minimum size=3pt,inner sep=0pt]
    \tikzstyle{nonode}=[draw=white,circle=red,fill=white,minimum size=15pt,inner sep=0pt]
 
\draw (-1,0) edge [ dotted, thick]  node {} (-0.5,0);
\draw (0,0) node[whitenode] (a1)  {$v_0$};
\draw (1.5,0) node[whitenode] (a2)   {$v_1$};
\draw (3,0) node[whitenode] (a3)   {$v_2$};
\draw (4.5,0) node[whitenode] (a4)   {$v_3$};
\draw (7,0) node[whitenode] (a6)   {$v_i$};
\draw (8.5,0) node[whitenode] (a7)   {$v_{i+1}$};
\draw (11,0) node[whitenode] (a9)   {$v_k$};

\draw (1.5,-1) node[blacknode] (b1) [] {};
\draw (2.9,-1) node[blacknode] (b2) [] {};
\draw (3.1,-1) node[blacknode] (b3) [] {};
\draw (4.2,-1) node[blacknode] (b4) [] {};
\draw (4.4,-1) node[blacknode] (b5) [] {};
\draw (4.6,-1) node[blacknode] (b6) [] {};
\draw (4.8,-1) node[blacknode] (b7) [] {};
\draw (6.8,-1) node[blacknode] (b9) [] {};
\draw (6.9,-1) node[blacknode] (b10) [] {};
\draw (7,-1) node[blacknode] (b11) [] {};
\draw (7.1,-1) node[blacknode] (b12) [] {};
\draw (7.2,-1) node[blacknode] (b13) [] {};
\draw (8.1,-1) node[blacknode] (b15) [] {};
\draw (8.2,-1) node[blacknode] (b16) [] {};
\draw (8.3,-1) node[blacknode] (b17) [] {};
\draw (8.4,-1) node[blacknode] (b18) [] {};
\draw (8.5,-1) node[blacknode] (b19) [] {};
\draw (8.6,-1) node[blacknode] (b20) [] {};
\draw (8.7,-1) node[blacknode] (b21) [] {};
\draw (8.8,-1) node[blacknode] (b22) [] {};
\draw (8.9,-1) node[blacknode] (b23) [] {};

\draw (10.6,-1) node[blacknode] (b24) [] {};
\draw (10.7,-1) node[blacknode] (b25) [] {};
\draw (10.8,-1) node[blacknode] (b26) [] {};
\draw (10.9,-1) node[blacknode] (b27) [] {};
\draw (11,-1) node[blacknode] (b28) [] {};
\draw (11.1,-1) node[blacknode] (b29) [] {};
\draw (11.2,-1) node[blacknode] (b30) [] {};
\draw (11.3,-1) node[blacknode] (b31) [] {};
\draw (11.4,-1) node[blacknode] (b32) [] {};
\draw (10.6,-1.1) node[blacknode] (b33) [] {};
\draw (10.7,-1.1) node[blacknode] (b34) [] {};
\draw (10.8,-1.1) node[blacknode] (b35) [] {};
\draw (10.9,-1.1) node[blacknode] (b36) [] {};
\draw (11,-1.1) node[blacknode] (b37) [] {};
\draw (11.1,-1.1) node[blacknode] (b38) [] {};
\draw (11.2,-1.1) node[blacknode] (b39) [] {};
\draw (11.3,-1.1) node[blacknode] (b40) [] {};
\draw (11.4,-1.1) node[blacknode] (b41) [] {};

\draw (a1) edge node {} (a2);
\draw (a2) edge node {} (a3);
\draw (a3) edge node {} (a4);
\draw (a4) edge [dotted] node {} (a6);
\draw (a6) edge node {} (a7);
\draw (a7) edge [dotted] node {} (a9);

\draw (a2) edge node {} (b1);

\draw (a3) edge node {} (b2);
\draw (a3) edge node {} (b3);

\draw (a4) edge node {} (b4);
\draw (a4) edge node {} (b5);
\draw (a4) edge node {} (b6);
\draw (a4) edge node {} (b7);

\draw (a6) edge node {} (b9);
\draw (a6) edge node {} (b10);
\draw (a6) edge node {} (b11);
\draw (a6) edge node {} (b12);
\draw (a6) edge node {} (b13);

\draw (a7) edge node {} (b15);
\draw (a7) edge node {} (b16);
\draw (a7) edge node {} (b17);
\draw (a7) edge node {} (b18);
\draw (a7) edge node {} (b19);
\draw (a7) edge node {} (b20);
\draw (a7) edge node {} (b21);
\draw (a7) edge node {} (b22);
\draw (a7) edge node {} (b23);

\draw (a9) edge node {} (b24);
\draw (a9) edge node {} (b25);
\draw (a9) edge node {} (b26);
\draw (a9) edge node {} (b27);
\draw (a9) edge node {} (b28);
\draw (a9) edge node {} (b29);
\draw (a9) edge node {} (b30);
\draw (a9) edge node {} (b31);
\draw (a9) edge node {} (b32);
\draw (a9) edge node {} (b33);
\draw (a9) edge node {} (b34);
\draw (a9) edge node {} (b35);
\draw (a9) edge node {} (b36);
\draw (a9) edge node {} (b37);
\draw (a9) edge node {} (b38);
\draw (a9) edge node {} (b39);
\draw (a9) edge node {} (b40);
\draw (a9) edge node {} (b41);
\end{tikzpicture}
    \caption{A path $v_0,\dots,v_k$ with increasing ``leaf degrees'', i.e., with $\ldeg(v_{i+1})\geq f(\ldeg(v_i))$. The small black nodes are leaves.}
    \label{fig:incr-tree}
\end{figure}

A balanced node $v$ may thus have neighbors $w$ with $\ldeg(w)>\ldeg(v)$, yet the number of leaves adjacent to $w$ cannot be too large, as it must be less than $f(\ldeg(v))$.  In the Increasing Degree problem $\Pi_f$, for a fixed function~$f$, each node $v$ receives a value $\ell_{in}(v)\in\{0,1\}$ as input, and must produce an output $\ell_{out}(v)\in\{0,1\}$. The output of a balanced node must simply be equal to its input, whereas the output of an unbalanced node $v$ must be equal to the output of one of its heavy neighbors~$w$. 

The difficulty for an unbalanced node $v$ to produce its output is that its heavy neighbor $w$ may be unbalanced too, with heavy neighbor~$w'$ which may also be unbalanced, and so on, until a balanced node is found. This chain of heavy neighbors (see Figure~\ref{fig:incr-tree}) may force $v$ to query nodes at long distance from it for producing its output. Formally, the Increasing Degree problem is described as follows:  

\begin{algorithmbox}{\textsc{Increasing Degree Problem $\Pi_f=(L,\varphi_f)$}}
\label{def:increasing_degree_problem}
		
\textbf{Labeling at node $v$:} 
\[
\ell(v)=(\ell_{in}(v),\ell_{out}(v))\in L=\{0,1\}\times \{0,1\} =L_{in}\times L_{out}
\]

\textbf{Correctness predicate $\varphi_f$ at node $v$:} 
\begin{align*}
\varphi_f(v)  = & 
 \Big((\mbox{$v$  balanced}) \land (\ell_{out}(v)=\ell_{in}(v))\Big) \;\; \lor \\
& 
\Big((\mbox{$v$ unbalanced}) \land \big(\exists w\in N(v) :  (\mbox{$w$ heavy w.r.t. $v$}) 
\land  (\ell_{out}(v)=\ell_{out}(w))\big)\Big)
\end{align*}		
\end{algorithmbox}
    
An algorithm solves the Increasing Degree Problem $\Pi_f$ in a graph $G=(V,E)$ if, given any input assignment $\ell_{in}:V\to\{0,1\}$ to the nodes (i.e., node $v$ receives $\ell_{in}(v)$ as input), every node $v$ outputs $\ell_{out}(v)$ such that the predicate $\varphi_f$ is satisfied at all nodes.

\subparagraph*{Remark.} 

Any solution $\ell:V\to\{0,1\}\times\{0,1\}$ to an Increasing Degree Problem $\Pi_f$ is locally checkable at every node $v$. Indeed, assuming every node initially knows its degree, whether $\varphi_f(v)$ is satisfied can be decided based solely on information available in the distance-2 neighborhood of~$v$: At the first round, every node $v$ computes the number of leaves adjacent to it, which is sufficient to check $\varphi_f(v)$ using one additional round. Note that, in trees, one can define a variant of the Increasing Degree problem in which one plays directly with the degrees, and not with the number of leaves. In this case, checking $\varphi_f(v)$ can be done in a single round. 

\medskip

On the other hand, computing the output label at a node may require a large number of rounds, depending on~$f$. Intuitively, the reason for this is that there might exist a long path $v_0,v_1,\dots$ in the input graph~$G$, where, for every $i\geq 1$, $v_i$ is a heavy neighbor of $v_{i-1}$ (see Figure~\ref{fig:incr-tree} for an illustration). As a consequence, the output  $\ell_{out}(v_0)$ may depend on the input label $\ell_{in}(v_k)$ of a node $v_k$ at large distance $k$ from $v_0$. Yet, if $d_i=\ldeg(v_i)$, we have that $d_i\geq f^{(i)}(d_0)$ where $f^{(i)}$ denotes the $i$th iterate of~$f$, and this value cannot exceed~$n$, which in turn upper bounds the value of $k$ as a function of~$n$. Roughly, by picking the smallest value for~$d_0=\ldeg(v_0)$, i.e., $d_0=0$, we show that the complexity of $\Pi_f$ is
\begin{equation}
\label{eq:generic}
    t(n)= \max\Big\{k\geq 0\mid (k+1)+\sum_{i=1}^k f^{(i)}(0)\leq n\Big\}
\end{equation}
whose value is represented by the length of the path $v_0, \dots v_k$ in Figure~\ref{fig:incr-tree}.
The difficulty of the analysis thus comes from the difficulty of solving Equation~\eqref{eq:generic} for a given function~$f$. 
For some specific functions~$f$, this is straightforward, but this is not systematic. Moreover, obtaining problems $\Pi_f$ with very small round complexities requires to choose rapidly growing functions~$f$, for which solving this equation is non trivial. 
This holds even if, for rapidly growing functions~$f$, Equation~\eqref{eq:generic} can be simplified into
\begin{equation}
\label{eq:generic-simple}
    t(n)\simeq \max\Big\{k\geq 0\mid f^{(k)}(0)\leq n\Big\}.
\end{equation}
Again, solving Equation~\eqref{eq:generic-simple} might be straightforward for some specific functions, but not for the vast majority of them. This is where our aforementioned ``translation operator'' $\Phi$ plays an important role for providing a systematic way of generalizing solutions for some functions $f$ to solutions for functions growing much faster than~$f$. Given any \problem $\Pi_f$ with complexity $t_f(n) \leq O(\log n)$, the operator $\Phi$ produces a new version of the \problem, $\Pi_{\Phi(f)}$, with exponentially smaller complexity. 
More specifically, this operator transforms $f$ into the function $\Phi(f)$ where 
\[
\Phi(f)(x) =
\label{tower2}
    {
    \left.
    \begin{array}{c}
        2^{2^{\iddots^{2}}}
    \end{array}
    \right\}
    \text{\footnotesize $f(\log^\star x)$ times}}
\]
from which it follows that $\Pi_{\Phi(f)}$ has complexity 
\[
t_{\Phi(f)}(n) = \Theta\big(t_f(\log^\star n)\big).
\]
Such technique plays an important role: once some version of the \problem with complexities belonging to the interval $[\omega(\log^\star n,), o(\log n)]$ have been found, we are able to find some other correlated problem with complexities in $[\omega(1), o(\log \log^\star n)]$. Note that the two intervals  $[\omega(1), o(\log \log^\star n)]$ and $[\omega(\log^\star n,), o(\log n)]$   are the well-known gaps where no complexity can be found for any general LCL problem (in the bounded degree setting).

\subsection{Related Work}

An excellent entry to the research line studying the  landscape of complexity of LCL problems in the \LOCAL\ model is~\cite{suomela2020landscape}. This landscape has also been studied for specific sub-classes of graphs such as paths, cycles, and toroidal grids~\cite{brandt2017lcl}, paths with inputs~\cite{balliu2019distributed}, trees~\cite{chang2020complexity,grunau2022landscape}, and rooted trees~\cite{balliu2021locally}. It is worth mentioning that another form of complexity landscape, focusing on the size of the certificates required to certify predicates on paths and cycles, has recently been studied in~\cite{bousquet2025complexity}.

To show the existence of problems with complexities belonging to some specific sub-interval of $[0,n-1]$, a common technique is to generate (artificial) problems with specific complexities. A large number of reachable complexities between $\Omega(\log\log^*n)$ and $o(\log n)$, as well as of the form $n^{\Theta(1)}$ and $2^{\log^{\Theta(1)}n}$ have been exhibited in~\cite{balliu2018almost,balliu2018new}. In the case of trees, it was proved~\cite{chang2019time} that there are problems with complexities $n^{1/k}$, for any integer $k\ge2$, but that there is a complexity gap in the interval $[\omega(\log n),o(\mbox{poly}(n))]$. All these constructions assume, for the upper bound on the complexity of each considered problem, that the nodes are given the exact value $n$ of the size of the network. It was proved~\cite{balliu2026distributed} that the existing complexities in trees may vary depending on the degree of knowledge about~$n$, and that randomization helps in absence of any knowledge, which is not the case when $n$ is known. We also refer to~\cite{FraigniaudGKS13} about the impact of the knowledge of $n$ on the ability to perform efficient distributed computation. 
In terms of techniques, a mechanism forcing nodes to transmit information along a path of length~$k$ to get a problem with complexity~$k$ has already been used (see, e.g., \cite{balliu2019distributed,balliu2025shared}).

We complete this brief survey of the related literature by pointing out that the complexity of locally checkable problems when the constraint imposing the maximum degree to be bounded is removed has already been recently studied in~\cite{bousquet2024local,schmid2026lcls}:
\begin{itemize}
    \item In~\cite{schmid2026lcls}, the existence of locally checkable problems with round complexities $\Theta(n^\alpha)$, for any $\alpha\in [0,1]$, has been established. This result is obtained using known facts about algebraic partitions of integers. Our results for the polynomial range reuse some of these ingredients for constructing problems with finer round complexities $\Theta(n^\alpha\log^\beta n)$, in complement to our main results regarding the sub-polynomial regime. 
    
    \item The construction in~\cite{bousquet2024local} enables to obtain locally checkable problems with \emph{arbitrary} round-complexities, whenever one restricts the analysis to trees. The issue of extending their results from trees to arbitrary graphs was explicitly left open in~\cite{bousquet2024local}. Moreover, as underlined in~\cite{schmid2026lcls}, ``the result in~\cite{bousquet2024local} offers little insight into \emph{why} locally checkable problems can behave so erratically whenever the degree constraint is relaxed'', and ``it is unclear which aspects are responsible for the collapse of structure in the complexity landscape for these problems''. Instead, our problems are explicit, simple to describe, and their solutions can be checked locally in a single round. Last but not least, the upper bound in~\cite{bousquet2024local}  works only under the strong condition that the nodes have the knowledge of $n$, for these nodes to compute an upper bound on how many rounds they must run before outputting. Our construction does not require to know $n$ as the number of rounds follows directly from structural conditions.
\end{itemize}
Furthermore, as mentioned before, it is known~\cite{Piombi2026} that the complexity gaps persist even for rooted trees of unbounded degrees, for the wide class of locally checkable problems defined by local Presburger monadic second-order formulas. In the Increasing Degree problems, we use functions such as $x\mapsto x^2$, which cannot be implemented in Persburger arithmetic. Last but not least, in the paper~\cite{damore26superlogarithmicgap} just published on arXiv, it was shown that, in trees, any LCL problem can either be
solved in $O(\log n)$ rounds by a deterministic algorithm, or requires $n^{\Omega(1)}$ rounds to be solved by a
quantum-LOCAL algorithm.

\subsection{Organization of the paper}

Section~\ref{section:warm_up} shows the existence of a locally checkable problem with complexity $\Theta(\log\log n)$, whereas it is known that no LCL problem can have such complexity. It also shows the existence of an \problem with complexity $\Theta(\sqrt{n})$. This latter complexity was known the be achievable by LCL problems, but  Section~\ref{section:warm_up} is also the opportunity to introduce the reader with the \problem at the core of our contribution.  

Section~\ref{sec:sublogn} considers the sublogarithmic gap $[\omega(\log^\star n), o(\log n)]$ that we refer to as the Chang-Pettie Interval~\cite{chang2019time}, whereas Section~\ref{sec:Ramsey-gap} consider the gap $[\omega(1),o(\log\log^*n)]$ that we refer to as the Ramsey interval~\cite{naor1993can}. Section~\ref{sec:gaps-in-trees} is partially dedicated to filling up the known gaps in the complexity of LCL problems in trees, in both polynomial and polylogarithmic regimes, using a simple variant of the Increasing Degree Problem. Section~\ref{sec:refined-gaps} focuses on complexities above $\Omega(\sqrt{n})$, providing refined complexities of the form $\Theta(n^{\frac12+\epsilon}\log^{\alpha}n)$. Finally, Section~\ref{sec:conclusion} concludes the paper with open problems.  

\section{Problems with Respective Complexities $\Theta(\log\log n)$ and $\Theta(\sqrt{n})$}
\label{section:warm_up}

This section can be used as a warm-up, but it already includes an important result, namely the existence of a locally checkable problem with complexity $\Theta(\log\log n)$, whereas it is known that no LCL problem has complexities in the interval $[\omega(\log^\star n),o(\log n)]$. This section also shows the existence of a locally checkable problem with complexity $\Theta(\sqrt{n})$. The existence of such a problem was already known, even for LCL problems, and even if one restricts attention to trees. However, our construction is new, and it provides an introduction to the more sophisticated construction introduced later in the paper. Moreover, our construction does not use any knowledge about the value of~$n$, as opposed to the previous construction~\cite{chang2019time}, about which it has been recently proven in~\cite{balliu2026distributed} that without this knowledge, the complexity changes.

\subsection{A Locally Checkable Problem with Complexity $\Theta(\log\log n)$}
\label{subsection:warm_up-loglog}

\begin{theorem}
\label{thm:logloground}
    There exists a locally checkable problem with round complexity $\Theta(\log\log n)$ in $n$-node graphs. The lower bound $\Omega(\log\log n)$ also holds for trees. 
\end{theorem}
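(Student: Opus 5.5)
We instantiate the \problem $\Pi_f$ with $f(x)=2(x+1)^2$, or any function with $f(x)\ge x^2$ that is polynomial in $x$ and satisfies $f(x)>x$. The iterates then grow doubly exponentially: $f^{(i)}(0)\ge 2^{2^{i-1}}$, and also $f^{(i)}(0)\le 2^{c\,2^{i}}$ for some constant $c$. Equation~\eqref{eq:generic} then gives $t(n)=\Theta(\log\log n)$. The proof has two parts: an upper bound for arbitrary graphs in the port-numbering model, and a lower bound for trees.

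\textbf{Upper bound.} Every node $v$ computes $\ldeg$ of itself and of its neighbours in two rounds, so it knows whether it is balanced and which neighbours are heavy. The algorithm is the natural one. A balanced node outputs its input. An unbalanced node $v$ picks a heavy neighbour, say the one with the smallest port number, and follows the resulting chain $v=v_0,v_1,v_2,\dots$. It adopts the input of the first balanced node $v_k$ on the chain.

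This output is consistent. The chain from $v_1$ is a suffix of the chain from $v_0$, so $\ell_{out}(v_0)=\ell_{out}(v_1)$ with $v_1$ heavy for $v_0$, and $\varphi_f$ holds everywhere.

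It remains to bound the chain length. Along the chain, $\ldeg(v_{i})\ge f(\ldeg(v_{i-1}))$, hence $\ldeg(v_i)\ge f^{(i)}(0)$. The values $\ldeg(v_i)$ are strictly increasing, so the $v_i$ are distinct, the chain does not cycle, and it terminates. Their leaf sets are disjoint: a leaf has a unique neighbour, and the $v_i$ are non-leaves because an unbalanced node has degree $\ge 2$ and a heavy node has $\ldeg\ge 1$. Therefore $n\ge k+\sum_{i\le k-1} f^{(i)}(0)\ge f^{(k-1)}(0)\ge 2^{2^{k-2}}$, which gives $k=O(\log\log n)$. Node $v$ needs $O(k)$ rounds, and it does not need to know $n$ because it simply stops upon reaching a balanced node.

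\textbf{Lower bound.} Build the caterpillar of Figure~\ref{fig:incr-tree}: a path $v_0,\dots,v_k$ where $v_i$ carries exactly $d_i=f^{(i)}(0)$ pendant leaves, with $d_0=0$, so $v_0$ is an endpoint. Choose $k$ maximal with $k+1+\sum_{i\le k} d_i\le n$, and pad with extra leaves or a path hung elsewhere so that the size is exactly $n$. This gives $k=\Omega(\log\log n)$, using the upper estimate $f^{(i)}(0)\le 2^{c2^i}$.

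We need $v_0,\dots,v_{k-1}$ unbalanced and $v_k$ balanced, with the heavy neighbour of $v_i$ for $i<k$ being $v_{i+1}$ only. This holds because $d_{i+1}=f(d_i)$, while $d_{i-1}<f(d_i)$. The node $v_k$ is balanced because $d_{k-1}<f(d_k)$. The padding must not create new heavy neighbours, which is the fiddly point; attaching the padding as a long path at $v_k$ (nodes of $\ldeg\le 1$) will be checked to keep every constraint valid.

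Correctness then forces $\ell_{out}(v_0)=\ell_{out}(v_1)=\dots=\ell_{out}(v_k)=\ell_{in}(v_k)$. Take the two instances with $\ell_{in}(v_k)=0$ and $\ell_{in}(v_k)=1$, identical otherwise, with the same identifiers and $n$. They agree on the radius-$(k-1)$ ball around $v_0$. Any algorithm running fewer than $k$ rounds, including a randomized or non-signaling/quantum one, produces the same output distribution at $v_0$ in both instances. So it fails with probability $\ge 1/2$ on one of them. This indistinguishability argument is the standard one, and it transfers to the non-signaling setting verbatim.

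The main obstacle is the exact counting that relates $k$ and $n$ in both directions, and verifying that the padding leaves the heavy structure intact. Picking $f$ with clean doubly exponential iterates, such as $f(x)=(x+2)^2$, makes the counting routine.
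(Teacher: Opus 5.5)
Your route is the paper's own: the Increasing Degree problem with a quadratic-type $f$, a heavy-chain-following algorithm in the port-numbering model with chain length bounded by counting disjoint pendant leaves, and a caterpillar plus indistinguishability for the lower bound. The paper uses $f(x)=x^2$ patched by $f(0)=1$, $f(1)=2$. Your upper bound is correct. Deriving $\ldeg(v_i)\ge f^{(i)}(0)$ from monotonicity of $f$ is cleaner than the paper's $d_i\ge d_0^{2^{i+1}}$.

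\textbf{The gap is in your lower-bound instance.} With $d_0=0$, the endpoint $v_0$ has degree~$1$. So $v_0$ is itself a leaf, hence balanced by definition, and correctness only forces $\ell_{out}(v_0)=\ell_{in}(v_0)$. In addition, $v_0$ counts as a leaf of $v_1$, so $\ldeg(v_1)=d_1+1$. Then $v_1$ would need a neighbour with at least $f(d_1+1)>f(d_1)=d_2=\ldeg(v_2)$ leaves, so $v_1$ is balanced too. Concretely, with $f(x)=2(x+1)^2$ you get $\ldeg(v_1)=3$ and $f(3)=32>18=\ldeg(v_2)$. Your check ``$d_{i+1}=f(d_i)$ while $d_{i-1}<f(d_i)$'' misses exactly this extra leaf. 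So the claim that $v_0,\dots,v_{k-1}$ are all unbalanced, and the forced chain of equal outputs from $v_0$, are false as stated.

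\textbf{The fix costs only a constant.} Either run the indistinguishability argument from $v_2$, which is unbalanced with unique heavy neighbour $v_3$ and lies at distance $k-2$ from $v_k$. Or, as the paper does, give $v_0$ one pendant leaf and set $d_i=f^{(i)}(1)$.

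\textbf{The padding needs no delicate check.} The simplest choice is to hang all surplus nodes as extra pendant leaves of $v_k$. By monotonicity of $f$, this keeps $v_k$ balanced and keeps it the unique heavy neighbour of $v_{k-1}$. If you hang a path at $v_k$ instead, its first node does become unbalanced with heavy neighbour $v_k$. That is harmless, since only the status of the spine nodes matters.
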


\begin{proof}
    We show that the Increasing Degree Problem $\Pi_f$ with $f(x)=x^2$ satisfies the statement of the theorem. More precisely, let $f:\mathbb{N}\to\mathbb{N}$ defined as 
    \begin{equation}
        \label{eq:x-square}
    f(x)=\left\{\begin{array}{ll}
    1 & \mbox{if $x=0$} \\
    2 & \mbox{if $x=1$} \\
    x^2 & \mbox{otherwise}
    \end{array}\right.
    \end{equation}
    in order to guarantee that $f(x)>x$ for every~$x$.
    To establish the upper bound, we first show the following basic claim. 
    
    \begin{claim}\label{claim:max-length-of-heavy-path}
        For any path $P=(v_0,\dots,v_k)$, if $v_0,\dots,v_{k-1}$ are unbalanced nodes and, for every $i\in\{1,\dots,k\}$, $v_i$ is a heavy neighbor of $v_{i-1}$ with respect to~$f$, then $P$ is of length $k=O(\log\log n)$.
    \end{claim}
    
    To establish the claim,  let $d_i=\ldeg(v_i)$ for every $i\in\{0,\dots,k\}$. 
    For every $i\geq 1$, since $v_i$ is a heavy neighbor of $v_{i-1}$, we have $d_i\geq  d_{i-1}^2$, and thus $d_i\geq d_0^{2^{i+1}}$. (Eq.~\eqref{eq:x-square} guarantees that if $d_0 \in \{0,1\}$, then $d_1 \geq 2$, so the amount of leaf neighbors increases traveling the path even for these cases.) 
    As a consequence, the total number of nodes in $P$ plus the leaves adjacent these nodes is at least 
    \[
    (k+1)+\sum_{i=0}^kd_i = \Omega\left(2^{2^k}\right).
    \]
    Since this number cannot exceed $n$ in $n$-node graphs, we get that the length $k$ of the path cannot exceed $O(\log\log n)$, as claimed.

\subparagraph*{The Algorithm.} 

Our algorithm (see Algorithm~\ref{alg:generic-code2}) does not require to know the aforementioned bound $O(\log\log n)$. Moreover, it does not require node identifiers, nor any information of the number of nodes in the network, and it can be implemented in the port-numbering version of the \LOCAL\ model. It proceeds as follows.  
    After having collected the degrees $\deg(w)$ and the leaf-degree $\ldeg(w)$ of each of its neighbors~$w$, every node $v$ knows whether it is balanced or unbalanced, and, in the latter case, which of its neighbors are heavy. The algorithm then proceeds differently depending on whether $v$ is balanced or not. 
    \begin{itemize}
        \item If $v$ is balanced, then it merely sends its input $\ell_{in}(v)$ to all its neighbors, sets its output $\ell_{out}(v)$ to $\ell_{in}(v)$, and terminate. 

        \item If $v$ is unbalanced, then it waits to receive a value $x\in \{0,1\}$ from one of its heavy neighbors. When it eventually receives such a value~$x$, it forwards it to all its neighbors, and terminates with output $\ell_{out}(v)=x$.
    \end{itemize}

\begin{algorithm}[htb]
\caption{Solving the Increasing Degree Problem $\Pi_f$: Code of node $v$ with input $\ell_{in}(v)\in\{0,1\}$}
\label{alg:generic-code2}
\begin{algorithmic}[1]
\State send $\deg(v)$ to neighbors, and receive $\deg(w)$ from every neighbor $w$
\State send $\ldeg(v)$ to neighbors, and receive $\ldeg(w)$ from every neighbor $w$
\If{$v$ is balanced}
\State send $\ell_{in}(v)$ to all neighbors
\State $\ell_{out}(v)\gets \ell_{in}(v)$
\Comment{\textit{balanced nodes set their outputs and terminate}}
\Else
\Repeat
\State receive messages (if any) from neighbors
\Until{a value $x\in\{0,1\}$ is received from an heavy neighbor}
\State send $x$ to all neighbors
\State $\ell_{out}(v)\gets x$
\Comment{\textit{unbalanced nodes set their outputs and terminate}}
\EndIf \label{inst:phase2-ends}
\end{algorithmic}
\end{algorithm}

\subparagraph*{Correctness.} 

Note first that Algorithm~\ref{alg:generic-code2} guarantees that every balanced node $v$ outputs its input, as desired. To show that every unbalanced node $v$ outputs correctly, let us consider the general scenario of a chain $v_0,v_1,\dots,v_k$ for $k\geq 1$, satisfying that (1)~$v_0=v$, (2)~for every $i\in\{0,\dots,k-1\}$, node $v_i$ is unbalanced, and $v_{i+1}$ is a heavy neighbor of~$v_i$, and (3)~$v_k$ is balanced. Let us show by induction, that node $v_i$ terminates correctly at most $k-i$ rounds after $v_k$ terminated. The result holds for $v_{k-1}$ as it receives the value $x=\ell_{out}(v_k)$ directly from~$v_k$ at the round at which $v_k$ terminates. Assuming the result holds for $0 \leq i<k$, let us show that it holds for $i-1$. Let $t_i\leq i$ be the number of rounds elapsed between $v_k$ and $v_i$ terminate. When $v_{i-1}$ terminates, it is because it has received a value $x\in\{0,1\}$ from one of its heavy neighbors. This occurs at most $t_i+1$ rounds after $v_k$ terminated, as $v_i$ sends its output value $x$ to $v_{i-1}$ before terminating. Therefore, $t_{i-1}\leq t_i+1\leq k-(i-1)$, as claimed. It follows that $v=v_0$ terminates at most $k$ rounds after $v_k$ terminated. 

\subparagraph*{Upper Bound.} 

Since we have proved in Claim~\ref{claim:max-length-of-heavy-path} that no paths $v_0,v_1,\dots$ where $v_i$ is an heavy neighbor of $v_{i-1}$ for all $i\geq 1$ can have a length $k$ exceeding $O(\log\log n)$, we get that Algorithm~\ref{alg:generic-code2} completes in $O(\log\log n)$ rounds, as claimed.


\subparagraph*{Lower Bound.} 

We now show that the Increasing Degree Problem $\Pi_f$ has worst-case round complexity $\Omega(\log\log n)$, even in trees. For this purpose, let us consider the tree $T$ containing a path $P= (v_0, v_1, \dots, v_k)$, with $k=\Omega(\log\log n)$, which we refer to as the \emph{spine} of $T$. Node $v_0$ has one neighbor that is a leaf, and, for every $i\in\{1,\dots,k-1\}$, node $v_i$ is connected to $f^{(i)}(1)=2^{2^{i-1}}$ leaves. By construction, all nodes $v_0,\dots,v_{k-1}$ of the spine are unbalanced, and only $v_k$ is balanced. The lower bound merely follows from an indistinguishability argument: if node $v_0$ performs less than $k$ rounds, and outputs~0 in an instance where $\ell_{in}(v_k)=0$, then it also outputs~0 when  $\ell_{in}(v_k)=1$.  This completes the proof. 
\end{proof}

Observe that, by the same arguments as in the proof of \Cref{thm:logloground}, it can be derived  that the Increasing Degree problem $\Pi_f$ with $f(x)=x^2$ has complexity $\Theta(\log \log \Delta)$ in graphs with maximum degree~$\Delta$, hence constant round complexity whenever $\Delta=O(1)$, as for LCL problems. In fact, this is general for the Increasing Degree Problem $\Pi_f$, for all increasing function~$f$. Indeed, in graphs of maximum degree $\Delta$, the length of a chain of heavy neighbors cannot exceed a constant value $t=t(f,\Delta)$ uniquely defined by $f$ and $\Delta$. Therefore, Algorithm~\ref{alg:generic-code2} applied with $f$ will have complexity at most $t=O(1)$ in graphs of bounded maximum degree. 

\subsection{On Stronger Variants of LOCAL}

The lower bound in the proof of \Cref{thm:logloground} use simple indistinguishability arguments. As such, they also apply to randomized \LOCAL\ algorithms, and to non-signaling algorithms, which include quantum \LOCAL\  algorithms. This holds for all the lower bound proofs in the remainder of the paper. 
The case of the randomized online \LOCAL\ algorithm however needs some more care. Among the several well-studied variants of \LOCAL, the randomized online \LOCAL\ model is indeed the most powerful one. That is, any lower bound that holds in this latter model also holds in any other variants of \LOCAL\ such as randomized-\LOCAL, quantum-\LOCAL\ or \textsf{SLOCAL} \cite{akbari25randonline}.

Recall that, in the online \LOCAL\ model, the nodes of the input graph $G$ are processed sequentially with respect to an order $\sigma = (v_1, v_2, \ldots, v_n)$ arbitrarily chosen by an adversary. 
An algorithm $\AA$ for this model acts as an online algorithm, keeping track of the part of the graph discovered so far.
Specifically, at each step $i\geq 1$, $\AA$~updates its memory to include the portion of the graph $G$ within distance $d\geq 0$ of the node $v_i$ ---~referred to the locality of~$\AA$. $\AA$~has then to produce an irrevocable output label $\ell_{out}(v_i)$ for node $v_i$, which depends of the problem at hand (such a label may be a color, a Boolean indicating whether the node belongs to a maximal independent set, etc.). The algorithm then proceeds to the next step. $\AA$ may be deterministic or randomized.  The proof of Theorem \ref{thm:logloground} can be slightly modified to establish the following. 

\begin{corollary}
\label{cor:stronger-model}
    There exists a locally checkable problem with round complexity $\Theta(\log\log n)$ in $n$-node graphs, where the lower bound $\Omega(\log\log n)$ also holds for trees under the randomized online \LOCAL\ model. 
\end{corollary}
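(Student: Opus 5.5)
The upper bound is exactly the one from \Cref{thm:logloground}: Algorithm~\ref{alg:generic-code2} solves $\Pi_f$ with $f(x)=x^2$ (as in Eq.~\eqref{eq:x-square}) in $O(\log\log n)$ rounds. A deterministic \LOCAL\ algorithm with $T$ rounds is also an online \LOCAL\ algorithm with locality $T$, so nothing new is needed here. The whole task is the lower bound: no randomized online \LOCAL\ algorithm with locality $d=o(\log\log n)$ can solve $\Pi_f$ with high probability, even on trees.

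I would use the same tree as in the proof of \Cref{thm:logloground}, but with two spines attached to a common root, so that the adversary can exploit the order in which the spines are revealed. Concretely, take two copies of the spine gadget $P=(v_0,\dots,v_k)$ and $P'=(v'_0,\dots,v'_k)$ with $k=\Theta(\log\log n)$ and the same leaf decorations. Attach them by a path of length greater than $2d$ between the balanced ends $v_k$ and $v'_k$, keeping the connecting nodes balanced (low leaf-degree, e.g.\ no leaves), and pad the size to $n$ with extra leaves placed so that no balance status changes.

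The adversary strategy runs as follows.
\begin{enumerate}
\item Reveal $v_0$ first. Its radius-$d$ ball, with $d<k$, contains only unbalanced spine nodes $v_0,\dots,v_d$ and their leaves, and not $v_k$. So the output distribution of $v_0$ does not depend on $\ell_{in}(v_k)$.
\item Because the instance is symmetric, the same holds for $v'_0$ when revealed second: its ball is disjoint from the first one and isomorphic to it. With randomness, the algorithm's memory is the union of the two balls plus its coin tosses, and it still carries no information about $\ell_{in}(v_k)$ or $\ell_{in}(v'_k)$. Hence both outputs are fixed before any balanced endpoint is seen.
\item Correctness forces $\ell_{out}(v_0)=\ell_{in}(v_k)$, because every node on the chain is unbalanced with a unique heavy neighbor, namely the next spine node. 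The leaves of $v_i$ are balanced, and the previous spine node has fewer leaves. Similarly $\ell_{out}(v'_0)=\ell_{in}(v'_k)$.
\item The adversary fixes the inputs of $v_k$ and $v'_k$ to be $0,1$ or $1,0$ uniformly at random. It is independent of what the algorithm has seen, so whatever the algorithm outputs, it fails with probability at least $1/2$. This contradicts success with high probability.
\end{enumerate}

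In fact one spine already suffices if inputs are chosen uniformly at random after $v_0$'s output is committed. The online model lets the adversary pick the order but not adaptively change inputs, so I would fix the input distribution in advance and argue by Yao-style averaging. The key observation is that the online algorithm's view when committing $v_0$'s output is $B(v_0,d)$ plus randomness, independent of $\ell_{in}(v_k)$.

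The main obstacle is the global memory of online \LOCAL. One must ensure that no node processed before $v_0$ reveals $v_k$. The adversary handles this by simply processing $v_0$ first. The other point to check is that the padding and connection nodes do not alter the balance or heavy-neighbor relations along the spine. That is verified directly from the definition of balanced with respect to $f$.
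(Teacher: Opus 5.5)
Your proof is correct, but it takes a different and simpler route than the paper. Your adversary processes $v_0$ first, so the online algorithm's global memory is empty when it commits $\ell_{out}(v_0)$. Its view is then exactly $B(v_0,d)$ plus its coins. Varying only $\ell_{in}(v_k)$ (with $d<k$) gives two fixed instances, and on one of them the algorithm fails with probability at least $1/2$. This is what an oblivious adversary in randomized online \LOCAL\ needs, and both instances have the same graph and the same $n$.

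The paper does the opposite. It uses the order $(v_k,v_0,\dots)$, so the algorithm has already seen $\ell_{in}(v_k)=0$ when it labels $v_0$. It defeats this extra knowledge by a structural change. Inserting a leafless node $w$ between $v_{k/2}$ and $v_{k/2+1}$, with $\ell_{in}(v_{k/2})=1$, makes $v_{k/2}$ balanced and flips the required output of $v_0$. Meanwhile the two radius-$k/3$ balls seen so far are unchanged.

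So the paper creates uncertainty about \emph{where} the heavy chain of $v_0$ ends, rather than about the endpoint's input. This shows the slightly stronger fact that having seen the true endpoint does not help. However, it is phrased as an adaptive change of the instance, and it changes the node count. To fit the oblivious-adversary model, it must be reread as a two-instance indistinguishability argument. For the statement as posed, your single-spine argument suffices and is cleaner.

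Minor points:
\begin{itemize}
\item The two-spine gadget is superfluous, as you observe yourself.
\item Your claim that the connecting-path nodes are balanced is false for the node adjacent to $v_k$. It has $\ldeg=0$ while $\ldeg(v_k)\geq 1=f(0)$, so $v_k$ is a heavy neighbor of it. This is harmless, since $v_k$ stays balanced (its new neighbor has $\ldeg$ below $f(\ldeg(v_k))$) and no heavy-neighbor relation along the spine changes. The sentence should still be corrected.
\item Padding leaves should be attached to $v_k$ or placed away from the spine, not on interior spine nodes. Increasing $\ldeg(v_i)$ for $i<k$ could make $v_{i+1}$ no longer heavy and break the chain.
\end{itemize}
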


\begin{proof}
    We consider the same problem $\Pi_f$ as in the proof of Theorem \ref{thm:logloground}, and thus the upper bound follows from that theorem. For the lower bound, we show than no randomized online \LOCAL\ algorithms with locality $o(\log\log n)$ can solve~$\Pi_f$. We use the same tree $T$ which contains the path $P=(v_0,..., v_k)$, called the spine of~$T$, with $k = \Omega(\log \log n)$ (see Figure~\ref{fig:incr-tree}). 
    For the purpose of contradiction, let $\mathcal{A}$ be a randomized online \LOCAL\ algorithm solveing $\Pi_f$ with locality $k/3$. 
    Let then $\sigma = (v_k, v_0, v_1,\dots)$ be the sequential order which the adversary selects the nodes. 
    Let us assume that the input label of $v_k$ is $\ell_{in}(v_k)=0$, and that $\ell_{in}(v_{k/2})=1$. 
 
    Due to the structure of~$T$, $v_0$ has to know the output of $v_k$ to produce its own output. 
    After discovering $v_k$ and $v_0$, the adversary can change $T$ into~$T'$, which only differs from $T$ in the existence of an extra node $w$ in the spine $P$ inserted between  $v_{k/2}$ and $v_{k/2 + 1}$. 
    Since the locality of $\AA$ is $k/3$ the algorithm cannot distinguish $T$ from $T'$ after having discovered $v_k$ and $v_0$ only. 
    The presence of $w$ however makes $v_{k/2}$ a balanced node in $T'$. It follows that $v_0$ has to output $\ell_{out}(v_0)=1$ in $T'$, in contrast to the output $\ell_{out}(v_0)=0$ in $T$. Since both $T$ and $T'$ are legal input configuration after two steps of~$\AA$, there is no way for the algorithm to produce a correct output label for $v_0$. 
\end{proof}

\subparagraph*{Remark.}

The proof of Corollary~\ref{cor:stronger-model} is not specific of the complexity $\Theta(\log\log n)$, and it actually holds throughout the paper, for all results using an \problem $\Pi_f$ for establishing some new complexity, independently of the choice of the function~$f$. In particular, it also holds for the  \problem $\Pi_f$  with complexity $\Theta(\sqrt{n})$ described in the next section. 
    
\subsection{A Locally Checkable Problem with Complexity $\Theta(\sqrt{n})$}

\begin{theorem}\label{thm:sqrt_n}
 There exists a locally checkable problem with round complexity $\Theta(\sqrt{n})$ in $n$-node graphs. The lower bound $\Omega(\sqrt{n})$ also holds for trees. 
\end{theorem}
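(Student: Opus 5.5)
We use the Increasing Degree Problem $\Pi_f$ with the slowest admissible growth, $f(x)=x+1$ for every $x\in\mathbb{N}$, which satisfies $f(x)>x$. The upper bound reuses Algorithm~\ref{alg:generic-code2} verbatim, and its correctness argument from the proof of Theorem~\ref{thm:logloground} carries over unchanged. That argument never used the specific form of $f$: a node terminates at most $k$ rounds after the balanced endpoint of any chain of heavy neighbors of length $k$ starting at it. In particular, the algorithm again runs in the port-numbering model with no knowledge of $n$. The only new ingredient is the analogue of Claim~\ref{claim:max-length-of-heavy-path}.

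\textbf{Bounding chain length.} Consider a chain $P=(v_0,\dots,v_k)$ where each $v_i$ is a heavy neighbor of the unbalanced node $v_{i-1}$, and set $d_i=\ldeg(v_i)$. Then $d_i\geq d_{i-1}+1$, hence $d_i\geq i$. One subtlety is that the spine nodes must be disjoint from the leaves they are attached to. For $i\ge1$ this holds because $v_i$ has $d_i\ge 1$ leaf neighbors and at least one spine neighbor, so $\deg(v_i)\ge 2$ and $v_i$ is not a leaf. The node $v_0$ is unbalanced, so it is not a leaf by definition. Moreover, distinct spine nodes have disjoint sets of adjacent leaves, since a leaf has exactly one neighbor. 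Finally, the spine nodes themselves are distinct: $d_i$ is strictly increasing along $P$, so no node repeats. Counting spine nodes and their leaves gives
\[
(k+1)+\sum_{i=0}^{k} d_i \;\geq\; (k+1)+\frac{k(k+1)}{2} \;\le\; n
\]
only if $k=O(\sqrt n)$, that is, the total on the left cannot exceed $n$, which forces $k=O(\sqrt n)$. Plugging this bound into the correctness argument yields an $O(\sqrt n)$ round complexity. Checking this disjointness and distinctness carefully is the only step needing any care, and it is mild.

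\textbf{Lower bound.} We build a tree $T$ with a spine $v_0,\dots,v_k$ in which $v_0$ has $1$ leaf and $v_i$ has exactly $i+1$ leaves for $1\le i\le k-1$. Then each $v_{i+1}$ is heavy for $v_i$, since $\ldeg(v_{i+1})=i+2=f(i+1)$. Node $v_0$ needs at least one leaf so that it is not itself a leaf, and $v_1$ is heavy for $v_0$ because $2\ge f(1)$. We attach no leaves to $v_k$, so $v_k$ is balanced: its only neighbor $v_{k-1}$ has $k<f(0)$ leaves exactly when $k<1$, which fails. To avoid this, we instead give $v_k$ zero leaves and one extra path node, or more simply declare $v_k$ a leaf endpoint, which is balanced by definition. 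Every $v_i$ with $i<k$ is then unbalanced, and the chain has no branching choice, because each $v_i$ has a unique heavy neighbor: its predecessor has fewer leaves. The total size is $\Theta(k^2)$, so $k=\Theta(\sqrt n)$. We pad with extra leaves attached to $v_k$ (keeping it balanced) or with a path hanging off $v_k$ to reach exactly $n$ nodes. The indistinguishability argument from Theorem~\ref{thm:logloground} then applies. Node $v_0$ must output $\ell_{in}(v_k)$, which lies at distance $k$ from it, so any algorithm running $o(\sqrt n)$ rounds fails on one of the two choices of $\ell_{in}(v_k)$. As remarked after Corollary~\ref{cor:stronger-model}, the same argument also extends to randomized online \LOCAL.
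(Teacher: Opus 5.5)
Your proposal follows the paper's proof. You use the same choice $f(x)=x+1$, reuse Algorithm~\ref{alg:generic-code2}, bound the chain length with the same quadratic count, and prove the lower bound with the same spine-with-leaves tree and indistinguishability argument. Your explicit check that the chain nodes are distinct, because $d_i$ strictly increases, is a useful detail the paper leaves implicit. The displayed chain mixing $\geq$ and $\leq$ should simply read $n\ge (k+1)+\sum_i d_i\ge (k+1)+k(k+1)/2$.

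There is one local slip at the end of the spine in the lower bound. Consider each variant you describe: $v_k$ with no leaves (so $v_k$ is itself a leaf), or $v_k$ with a single pendant node. In each, no neighbor of $v_{k-1}$ has at least $f(\ldeg(v_{k-1}))\ge k+1$ leaves, so $v_{k-1}$ is balanced. This contradicts your claim that every $v_i$ with $i<k$ is unbalanced. The heavy chain from $v_0$ therefore stops at $v_{k-1}$, and $v_0$ must output $\ell_{in}(v_{k-1})$, not $\ell_{in}(v_k)$.

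This costs only one unit of distance, so the $\Omega(\sqrt n)$ bound survives. The clean fix is to attach at least $k+1$ leaves to $v_k$. Then $v_k$ is heavy for $v_{k-1}$, and $v_k$ is itself balanced, since $v_{k-1}$ has $k<f(\ldeg(v_k))$ leaves. The paper's spine leaves $\ldeg(v_k)$ unspecified as well and implicitly needs the same fix.
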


\begin{proof}
We show that the Increasing Degree Problem $\Pi_f$ with $f(x)=x+1$ satisfies the statement of the theorem. We proceed as in the proof of \Cref{thm:logloground}, by upper bounding the length of any chain of heavy neighbors. Let $P=(v_0,\dots,v_k)$ be a simple path such that, for every $i\in\{0,\dots,k-1\}$, $v_i$ is unbalanced, and $v_{i+1}$ is an heavy neighbor of $v_i$. If $d_i=\ldeg(v_i)$, then we have that the path $P$ together with the leaves attached to it amount for at least $\sum_{i=0}^k d_i$ nodes. Since $d_i\geq d_{i-1}+1$ for every $i\geq 1$, and $d_0\geq 0$, we get that even this sum is at least $\sum_{i=1}^ki=\frac12 k(k-1)$, and thus $k=O(\sqrt{n})$. We can then reuse Algorithm~\ref{alg:generic-code} with $f(x)=x+1$ instead of $f(x)=x^2$, and conclude that the complexity of $\Pi_f$ is $O(\sqrt{n})$, as claimed. 
The  lower bound $\Omega(\sqrt{n})$ for trees follows from the same construction as for the lower bound proof in \Cref{thm:logloground}. The constructed spine $v_0,\dots,v_k$ has length $k=\Omega(\sqrt n)$, and the lower bound follows. 
\end{proof}

%
\section{Complexities in the Chang-Pettie Interval}
\label{sec:sublogn}

The approach used to derive the $\Theta(\log\log n)$ and $\Theta(\sqrt{n})$ complexity bounds in \Cref{section:warm_up} consists of defining a suitable function $f\colon \mathbb{N} \to \mathbb{N}$ for which the corresponding Increasing Degree Problem $\Pi_f$ has the desired round complexity. In this section, we show how to apply this technique to get infinitely many complexity values in the Chang-Pettie interval $[\omega(\log^\star n,o(\log n)]$. To simplify the analysis of \Cref{eq:generic}, which will play a major role in the main results of this section, we make the following assumption.

\begin{remark}
\label{remark:fix_val}
  Throughout the paper, whenever considering a function $f\colon \N \to \N$, we assume that $f(0)=1$ and $f(1)=2$ are fixed, and, for $x\geq 2$, the function is defined according to a proper expression, as it was done for the function $f(x)=x^2$ in Eq.~\eqref{eq:x-square}.
\end{remark}

Given a function $f\colon \N \to \N$ with $f(x)>x$ for every~$x$, and a graph $G$, recall that a heavy path in $G$ is a path $P = (v_0,\dots,v_k)$ such that $v_k$ is balanced, every node in $V(P)\smallsetminus v_k$ is unbalanced, and, for every $i \geq 1$, the node $v_i$ is an heavy neighbor of $v_{i-1}$. The round complexity of the Increasing Problem $\Pi_f$ is the length of the longest heavy path. (This directly follows from the design of the generic Algorithm~\ref{alg:generic-code}, which provides the upper bound, and the generic construction in the proof of Theorem~\ref{thm:logloground}, which provides the lower bound.) We thus state the following for further references. 

\begin{observation}
\label{thm:general_complexity}
The round complexity $T_f(n)$ of the Increasing Problem $\Pi_f$ satisfies
\[
T_f(n) = \max\Big \{k\geq 0 \mid (k+1) + \sum_{j=1}^{k} f^{(j)}(0) \leq n\Big\}.
\]
\end{observation}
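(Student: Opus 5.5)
The plan is to prove matching upper and lower bounds, both governed by the quantity $K(n)=\max\{k\geq 0 \mid (k+1)+\sum_{j=1}^k f^{(j)}(0)\leq n\}$. As in the proofs of \Cref{thm:logloground} and \Cref{thm:sqrt_n}, we assume $f$ is increasing, $f(x)>x$, and (by Remark~\ref{remark:fix_val}) $f(0)=1$, $f(1)=2$.

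\textbf{Upper bound.} By the correctness argument of \Cref{thm:logloground}, Algorithm~\ref{alg:generic-code2} terminates at every node $v$ within $O(1)$ preliminary rounds plus the length of the heavy path along which the value reaching $v$ travels. It therefore suffices to bound the length $k$ of any heavy path $P=(v_0,\dots,v_k)$ in an $n$-node graph by $K(n)$. Set $d_i=\ldeg(v_i)$. Since $v_i$ is heavy for $v_{i-1}$, we have $d_i\geq f(d_{i-1})$. Because $f$ is increasing, induction from $d_0\geq 0$ gives $d_i\geq f^{(i)}(0)$. The spine nodes are pairwise distinct, and so are the leaves attached to them: a leaf has a unique neighbor, and for $k\geq 1$ the spine nodes are not leaves, since an unbalanced node has degree at least $2$ and $v_k$ has $d_k\geq 1$ leaves while also being adjacent to $v_{k-1}$. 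Counting these nodes yields $(k+1)+\sum_{j=1}^k f^{(j)}(0)\leq n$, so $k\leq K(n)$.

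\textbf{Lower bound.} Let $k=K(n)$ and build a caterpillar tree whose spine is $v_0,\dots,v_k$. Attach exactly $f^{(i)}(0)$ leaves to $v_i$ for $1\leq i\leq k$ and none to $v_0$. Pad up to exactly $n$ nodes without disturbing the chain, for instance by adding extra leaves to $v_k$, which is allowed because $v_k$ only needs to be balanced and a larger $\ldeg(v_k)$ keeps it heavy for $v_{k-1}$. We then check that $v_i$ is unbalanced with heavy neighbor $v_{i+1}$ for each $i<k$, and that $v_k$ is balanced.

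The delicate step is making the chain genuinely forced. We need $v_{i+1}$ to be the \emph{only} heavy neighbor of $v_i$, which requires $\ldeg(v_{i-1})=f^{(i-1)}(0)<f^{(i)}(0)$; this holds since $f(x)>x$. We also need $v_k$ balanced, which requires its only non-leaf neighbor $v_{k-1}$ to satisfy $\ldeg(v_{k-1})<f(\ldeg(v_k))$; this again follows from monotonicity. One further subtlety is $v_0$ with $d_0=0$: it has no leaves but is not itself a leaf, which is consistent with the requirement that it be unbalanced. If the padding leaves threaten any of these conditions, we would instead pad with a separate path hanging off $v_k$.

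With the chain forced, the indistinguishability argument of \Cref{thm:logloground} applies: correctness forces $\ell_{out}(v_0)=\ell_{in}(v_k)$, while $v_0$'s view after fewer than $k$ rounds is independent of $\ell_{in}(v_k)$. This gives $T_f(n)\geq K(n)$, and it holds for randomized, non-signaling, and online algorithms as remarked after Corollary~\ref{cor:stronger-model}. The remaining care is the additive $O(1)$ from the two initial rounds, which we absorb by reading the equality up to constants, as the observation is used in the paper.
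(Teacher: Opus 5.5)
Your route is the same as the paper's. The upper bound comes from Algorithm~\ref{alg:generic-code2} plus counting spine nodes and their pendant leaves along a heavy path, and the lower bound from a caterpillar plus indistinguishability. The upper-bound half is fine. Your explicit assumption that $f$ is increasing is in fact needed there: without it, $d_i\geq f(d_{i-1})$ does not give $d_i\geq f^{(i)}(0)$. The paper leaves this assumption implicit.

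The lower-bound construction, however, breaks exactly at the subtlety you flagged. In your caterpillar, $v_0$ has no pendant leaves and its only neighbor is $v_1$, so $\deg(v_0)=1$. Hence $v_0$ \emph{is} a leaf and is balanced by definition, so your claim that it ``is not itself a leaf'' is false. Moreover, $v_0$ is then one of the leaves of $v_1$, so $\ldeg(v_1)=f(0)+1=2$ instead of $1$. Since $\ldeg(v_2)=f^{(2)}(0)=f(1)=2<f(2)=f(\ldeg(v_1))$, the node $v_1$ is balanced too, as soon as $k\geq 3$. So your check that each $v_i$ with $i<k$ is unbalanced with heavy neighbor $v_{i+1}$ fails for $i=0$ and $i=1$. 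Correctness then forces $\ell_{out}(v_0)=\ell_{in}(v_0)$ rather than $\ell_{in}(v_k)$, and the forced chain only begins at $v_2$.

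The repair is cheap, and there are two options. You can run the indistinguishability argument from $v_2$, which is at distance $k-2$ from $v_k$. Alternatively, do as the paper does in the proof of \Cref{thm:logloground}: give $v_0$ one pendant leaf and give $v_i$ exactly $f^{(i)}(1)=f^{(i+1)}(0)$ pendant leaves. This makes the whole spine forced. For a spine of length $m$ the node count becomes $(m+1)+\sum_{j=1}^{m+1}f^{(j)}(0)$, so a spine of length $K(n)-1$ fits in $n$ nodes. Either way you lose only an additive constant. That is harmless under your reading of the equality up to $O(1)$, which is also how the paper uses the observation.
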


Furthermore, when $f$ grows rapidly, the expression for $T_f(n)$ in Observation~\ref{thm:general_complexity} is dominated  by the largest term, so we can observe what follows.

\begin{observation}
\label{lem:technical} 
If $ f(x)\geq 2x$ for every $x \in \N$, then $T_f(n)\in \Theta(\max\{k\geq 0\mid f^{(k)}(0)\leq n\})$.
\end{observation}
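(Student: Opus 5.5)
The plan is to start from Observation~\ref{thm:general_complexity} and show that, when $f(x)\geq 2x$, the sum $S_k=(k+1)+\sum_{j=1}^k f^{(j)}(0)$ is within a constant factor of its last term $f^{(k)}(0)$. Set $a_j=f^{(j)}(0)$ for $j\geq 0$. Remark~\ref{remark:fix_val} gives $a_1=1$ and $a_2=2$. For $j\geq 1$ the hypothesis $f(x)\geq 2x$ gives $a_{j+1}\geq 2a_j$. Iterating backwards from $a_k$ yields $a_j\leq a_k/2^{k-j}$ for $1\leq j\leq k$, and hence
\[
\sum_{j=1}^k a_j\leq a_k\sum_{i\geq 0}2^{-i}=2a_k .
\]
Since $a_k\geq 2^{k-1}$, we also get $k+1\leq 2a_k+O(1)$, say $k+1\leq 4a_k$ for $k\geq 1$. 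Combining the two bounds gives
\[
a_k\leq S_k\leq c\,a_k
\]
for an absolute constant $c$ (e.g.\ $c=6$) and all $k\geq 1$.

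Next, write $K(n)=\max\{k\geq 0\mid a_k\leq n\}$. The lower bound $S_k\geq a_k$ shows that $S_k\leq n$ forces $a_k\leq n$, so $T_f(n)\leq K(n)$. For the other direction, the upper bound $S_k\leq c\,a_k$ shows that $a_k\leq n/c$ implies $S_k\leq n$, so $T_f(n)\geq K(n/c)$. It therefore remains to show $K(n/c)\geq K(n)-O(1)$.

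This last comparison is where the doubling hypothesis is used again. Since $a_{j+1}\geq 2a_j$, going back $m=\lceil\log_2 c\rceil$ indices divides $a$ by at least $c$: $a_{K(n)-m}\leq a_{K(n)}/2^m\leq n/c$. Hence $K(n/c)\geq K(n)-m$, and so
\[
K(n)-m\leq T_f(n)\leq K(n).
\]
A constant additive loss turns into $\Theta$ once $K(n)\geq 2m$, which holds for all large $n$ because $K(n)\to\infty$: the sequence $a_k$ is finite and strictly increasing. Small $n$ only affects constants, and I will handle it by stating the asymptotic for large $n$.

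I expect the only delicate step to be the edge conventions. Since $a_0=0$, the doubling inequality must be applied only from $j=1$ onward. Also, the absolute constant $c$ must be translated into a constant shift in $k$, and that translation again relies on the doubling property rather than on any growth assumption beyond $f(x)\geq 2x$.
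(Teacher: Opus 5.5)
Your proof is correct and follows the same idea as the paper, which justifies this observation only by noting that the sum in Observation~\ref{thm:general_complexity} is dominated by its largest term $f^{(k)}(0)$ when $f(x)\geq 2x$. You also make explicit a step the paper leaves implicit: by the doubling property, a constant factor in $n$ shifts the maximal $k$ by only an additive $O(1)$, so $K(n)-m\leq T_f(n)\leq K(n)$ indeed yields $\Theta$.
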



%

The main result in this section is the following. 

\begin{theorem}
\label{lemma:complexities3}
   There exist locally checkable problems with the following round complexities in $n$-node graphs:
   $\Theta\left(\log^\star n\right)$,
$\Theta\left(\log^{\epsilon}n\right)$ for every $\epsilon\in (0,1)$, and 
$\Theta\left(\log^{(k)}n\right)$ for every integer $k\geq 1$. 
     In all cases, the lower bounds also hold for trees. 
\end{theorem}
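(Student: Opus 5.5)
The plan is to instantiate the Increasing Degree Problem $\Pi_f$ with a suitable $f$ for each target complexity. We then read off $T_f(n)$ from Observation~\ref{thm:general_complexity}, and when $f(x)\geq 2x$, from the simpler Observation~\ref{lem:technical}. Upper bounds come from Algorithm~\ref{alg:generic-code2}. Lower bounds, which also hold in trees, come from the spine construction of Theorem~\ref{thm:logloground}, namely a spine $v_0,\dots,v_k$ with $v_i$ carrying $f^{(i)}(0)$ leaves (adjusted by the normalization of Remark~\ref{remark:fix_val}). So in every case the whole task reduces to computing $\max\{k \mid f^{(k)}(0)\leq n\}$ up to constants. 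We also check that the construction can be padded to exactly $n$ nodes, for instance by hanging a path or extra leaves off the balanced endpoint $v_k$ without changing balance along the spine.

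For $\Theta(\log^\star n)$, take $f(x)=2^x$ for $x\geq 2$. Then $f^{(k)}(0)$ is a tower of height $k-O(1)$, so $\max\{k\mid f^{(k)}(0)\le n\}=\log^\star n\pm O(1)$. Since no identifiers are used, this yields the port-numbering $\Theta(\log^\star n)$ problem.

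For $\Theta(\log^{(k)} n)$, aim at $f^{(j)}(0)\approx \exp^{(k-1)}(2^{j})$, i.e.\ $f$ conjugates doubling by $\exp^{(k-1)}$. Concretely, set $f(x)=\exp^{(k-1)}\big(2\log^{(k-1)}x\big)$ for large $x$ (rounded to integers), with small values fixed so that $f(x)\ge 2x$. For $k=1$ this is $x\mapsto x^2$ up to rounding, recovering Theorem~\ref{thm:logloground} one level shifted. More simply, for $k=1$ take $f(x)=2x$, giving $f^{(j)}(0)=2^{j-1}$ and hence $\Theta(\log n)$. For $k=2$ take $x^2$, giving $\Theta(\log\log n)$. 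For general $k$, $\log^{(k-1)} f^{(j)}(0)$ doubles each step, so $f^{(j)}(0)\le n$ iff $2^{j}\lesssim \log^{(k-1)} n$, which gives $j=\Theta(\log^{(k)} n)$. Then I verify that $f(x)\geq 2x$ and that $f$ is monotone.

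For $\Theta(\log^{\epsilon} n)$, choose $f(x)=\lceil 2^{(\log x+1)^{... }}\rceil$ designed so that $\log f^{(j)}(0)\approx j^{1/\epsilon}$. That is, $f(x)=2^{\left((\log x)^{\epsilon}+1\right)^{1/\epsilon}}$ for large $x$, which maps $2^{j^{1/\epsilon}}$ to $2^{(j+1)^{1/\epsilon}}$. Then $f^{(j)}(0)\le n$ iff $j^{1/\epsilon}\lesssim \log n$, i.e.\ $j=\Theta(\log^\epsilon n)$. Here $f(x)\ge 2x$ holds since $(j+1)^{1/\epsilon}-j^{1/\epsilon}\ge 1$ for $\epsilon<1$.

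The main obstacle is the integer rounding and the small-value regime. Iterating a rounded conjugated function can drift from the idealized orbit, and Observation~\ref{lem:technical} needs $f(x)\ge 2x$ everywhere. I would handle this by showing that the orbit of $0$ stays within constant factors (in the appropriate iterated-log scale) of the ideal sequence $\exp^{(k-1)}(2^j)$, respectively $2^{j^{1/\epsilon}}$. The tool is monotonicity of $f$ plus sandwiching $f$ between two exact conjugates with shifted starting points. Only $\Theta$-bounds on $k$ are needed, so additive constant shifts in the index are harmless.
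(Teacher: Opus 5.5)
Your proposal is correct. For $\Theta(\log^\star n)$ and $\Theta(\log^{(k)}n)$ it matches the paper's proof. The paper also takes $f(x)=2^x$. Its tower of $k-2$ twos topped by $(\log^{(k-2)}x)^2$ is exactly your $\exp^{(k-1)}(2\log^{(k-1)}x)$ (with $\exp(y)=2^y$), along whose orbit $\log^{(k-1)}$ doubles.

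The genuine difference is the $\log^{\epsilon}$ family. The paper uses $f(x)=x\cdot 2^{\log^\epsilon x}$, whose iterates have no closed form. It writes $f^{(j)}(x)=2^{y_j}$ with $y_{j+1}=y_j+y_j^\epsilon$, and applies the Mean Value Theorem to $y\mapsto y^{1-\epsilon}$ to get $y_j=\Theta(j^{1/(1-\epsilon)})$. This gives $\Theta(\log^{1-\epsilon}n)$, which is the same family after reparametrizing $\epsilon$. You instead take $f=\phi^{-1}\circ(\phi+1)$ with $\phi(x)=(\log x)^\epsilon$, i.e.\ you conjugate $f$ to a unit translation. Then $\phi(f^{(j)}(x))=\phi(x)+j$ exactly, so the value $(\log n)^\epsilon-O(1)$ is read off directly, with no recurrence to estimate. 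This is the same conjugation principle behind the paper's $\log^{(k)}$ functions and its translation operator $\Phi$, and it adapts more readily to other targets. The paper's choice buys only a more elementary-looking formula for $f$, at the cost of the recurrence analysis.

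Your rounding worry is easily settled in this case. Put $y=2^{(\phi(x)+1)^{1/\epsilon}}$ and $f(x)=\lceil y\rceil\le 2y$. Subadditivity of $t\mapsto t^\epsilon$ gives
\[
\phi(x)+1\le\phi(f(x))\le(\log y+1)^\epsilon\le\phi(x)+2,
\]
so rounding costs at most a factor $2$ in the index.

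Two small slips. First, your formula $\exp^{(k-1)}(2\log^{(k-1)}x)$ gives $2x$ for $k=1$ and $x^2$ for $k=2$, not $x^2$ for $k=1$; your separate treatment of these two cases is nevertheless right. Second, in the lower-bound spine, $v_0$ cannot carry $f^{(0)}(0)=0$ pendant leaves, since it would then itself be a leaf and hence balanced. This is why the paper's tree puts one leaf on $v_0$ and $f^{(i)}(1)$ leaves on $v_i$; it only shifts the index by one.
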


The rest of the section is dedicated to the proof of Theorem~\ref{lemma:complexities3}.  
To exhibit \problem with complexity $\Theta(\log^\star n)$, let us consider the function $f\colon\N\to\N$ defined as 
\[
f(x)=2^x
\]
for every $x\geq 0$. Thanks to Observations~\ref{thm:general_complexity} and~\ref{lem:technical} we can focus on determining the maximum $k$ such that $f^{(k)}(0)\leq n$. We have
\[
f^{(k)}(0)
=
\left.
\begin{array}{c}
    2^{2^{\iddots^{2}}}
\end{array}
\right\}
\text{\small $k-2$ times}
\]
Therefore, $f^{(k)}(0)\leq n$ whenever $k\in O(\log^\star n)$, and the largest $k$ satisfying this inequality is in $\Theta(\log^\star n)$. 

\medskip

Let $\epsilon\in(0,1)$. To exhibit a problem with complexity $\Theta\left(\log^{1-\epsilon}n\right)$, let us consider the function $f\colon\N\to\N$ defined as 
\[
f(x)=x\cdot 2^{\log^\epsilon x}
\]
for $x\geq 2$, with the usual convention that $f(0)=1$, and $f(1)=2$. Again, we can focus on determining the maximum $k$ such that $f^{(k)}(0)\leq n$.
        
\begin{claim}
\label{prop:propofF}
For any $k,x\in \N$ with $x\geq 2$, we have 
$f^{(k)}(x) = 2^{y_k(x)},$ 
where 
$y_1(x) = \log x + \log^{\epsilon} x$ and $y_{k+1}(x) = y_k(x) + y_k^\epsilon (x)$ for all $k\geq 1$. 
\end{claim}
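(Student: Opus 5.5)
The identity follows by induction on $k$. Set $y_0(x)=\log x$, so that $x=2^{y_0(x)}$. The key point is that, for any argument $z\geq 2$, we can rewrite $f$ in exponential form:
\[
f(z)=z\cdot 2^{\log^{\epsilon} z}=2^{\log z+\log^{\epsilon} z}.
\]
Applying this to $z=x\geq 2$ gives $f^{(1)}(x)=2^{\log x+\log^\epsilon x}=2^{y_1(x)}$. This is the base case.

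For the inductive step, assume $f^{(k)}(x)=2^{y_k(x)}$ for some $k\geq 1$. Before applying the general expression of $f$ again, we must check that $f^{(k)}(x)\geq 2$; otherwise the special convention $f(0)=1$, $f(1)=2$ of Remark~\ref{remark:fix_val} would apply instead. I will show this by a side induction: $y_k(x)\geq y_0(x)=\log x\geq 1$ for all $k$. This holds because each step adds the nonnegative quantity $y_k^\epsilon(x)$, which is well defined since $y_k(x)\geq 1>0$. Hence $f^{(k)}(x)\geq 2$.

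With that in hand, we compute
\[
f^{(k+1)}(x)=f\big(2^{y_k(x)}\big)=2^{\log 2^{y_k(x)}+\log^\epsilon 2^{y_k(x)}}=2^{y_k(x)+y_k^\epsilon(x)}=2^{y_{k+1}(x)},
\]
using $\log 2^{y}=y$, with logarithms taken in base $2$ throughout.

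The only potentially delicate point is this side condition: keeping the iterates in the region $z\geq 2$ where the closed form of $f$ applies, and keeping $y_k$ positive so that $y_k^\epsilon$ makes sense. Both follow from the monotonicity $y_{k+1}\geq y_k$, which is immediate from the recursion. A small additional remark is that $f$ takes values in $\mathbb{N}$ only after rounding. If the paper intends $f(x)=\lfloor x\,2^{\log^\epsilon x}\rfloor$, the identity becomes an equality up to rounding; it can then be carried as two-sided bounds $2^{y_k(x)}/c\leq f^{(k)}(x)\leq 2^{y_k(x)}$, which suffices for the subsequent asymptotic analysis. I would treat $f$ as real-valued here, as the statement implicitly does.
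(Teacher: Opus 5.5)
Your proof is correct and follows the paper's induction: rewrite $f(z)=2^{\log z+\log^{\epsilon} z}$ and substitute $z=2^{y_k(x)}$, plus a useful extra check (left implicit in the paper) that the iterates stay $\geq 2$, so the closed form of $f$ always applies. One caution on your rounding aside: if $f$ is floored, a deficit $e$ in the exponent $y_k(x)$ is multiplied by roughly $1+\epsilon y_k^{\epsilon-1}=1+\Theta(1/k)$ at each step, so the errors compound and $2^{y_k(x)}/f^{(k)}(x)$ is not bounded by any constant $c$; what survives is $\log f^{(k)}(x)=\Theta(y_k(x))$, which is all the later asymptotic analysis needs.
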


\begin{proof} 
By induction on $k$. We have 
\[
f^{(1)}(n)
= n\cdot 2^{\log^\epsilon(n)}
= 2^{\log n}\cdot 2^{\log^\epsilon(n)}
= 2^{\log n + \log^\epsilon(n)}
= 2^{y_1}.
\]
Moreover, assuming that  $f^{(k)}(n) = 2^{y_k(n)}$, we have that
\[
f^{(k+1)}(n)
= f(f^k(n))
= f(2^{y_k(n)}) 
= 2^{y_k(n)}\cdot 2^{\log^\epsilon 2^{y_k(n)}} 
= 2^{y_k(n)+y_k^\epsilon(n)}
=  2^{y_{k+1}(n)}
\]
as claimed.
\end{proof}

\begin{claim}
\label{prop:propofF2}
For every integer $x\ge 2$, the sequence $(y_k(x))_{k\geq 0}$ defined in 
Claim~\ref{prop:propofF}
satisfies $y_k(x)\in \Theta\left(k^{1/(1-\epsilon)}\right)$.
\end{claim}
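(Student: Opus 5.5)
The plan is to compare the recurrence $y_{k+1}=y_k+y_k^\epsilon$ with the differential equation $y'=y^\epsilon$. That ODE has solution $y(t)^{1-\epsilon}=y(0)^{1-\epsilon}+(1-\epsilon)t$, so the natural quantity to track is $z_k:=y_k(x)^{1-\epsilon}$. I will show that $z_k$ grows by an amount bounded above and below by positive constants at each step. Summing these increments gives $z_k=\Theta(k)$, up to an additive term depending only on the fixed starting value $x$. Raising to the power $1/(1-\epsilon)$ then yields $y_k(x)=\Theta(k^{1/(1-\epsilon)})$, where $x$ is fixed and $k\to\infty$.

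First note that $y_k\ge y_1=\log x+\log^\epsilon x\geq 2$ for $x\ge 2$, and that $(y_k)$ is increasing. Fix $k$ and write $y=y_k$, so that $y_{k+1}=y(1+y^{\epsilon-1})$. Then
\[
z_{k+1}-z_k = y^{1-\epsilon}\big((1+y^{\epsilon-1})^{1-\epsilon}-1\big).
\]
Set $u=y^{\epsilon-1}$. Since $y\ge 1$, we have $u\in(0,1]$. I will use the elementary inequalities $(1-\epsilon)u/2\le (1+u)^{1-\epsilon}-1\le (1-\epsilon)u$ for $u\in[0,1]$.

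For the upper inequality, the map $u\mapsto(1+u)^{1-\epsilon}$ is concave, so it lies below its tangent line at $0$, which is $1+(1-\epsilon)u$. For the lower inequality, the derivative of this map is $(1-\epsilon)(1+u)^{-\epsilon}\ge(1-\epsilon)2^{-\epsilon}\ge(1-\epsilon)/2$ on $[0,1]$, and integrating from $0$ to $u$ gives the bound.

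Multiplying by $y^{1-\epsilon}=1/u$ gives
\[
\tfrac{1-\epsilon}{2}\le z_{k+1}-z_k\le 1-\epsilon .
\]
Telescoping then yields $z_1+\tfrac{1-\epsilon}{2}(k-1)\le z_k\le z_1+(1-\epsilon)(k-1)$. Since $z_1$ is a constant depending only on $x$, this gives $z_k=\Theta(k)$, and hence $y_k=z_k^{1/(1-\epsilon)}=\Theta(k^{1/(1-\epsilon)})$.

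The only real subtlety is the meaning of $\Theta$. The hidden constants depend on $x$ through $z_1$, and this is harmless for the intended application, where $x=2$ is fixed. If uniformity in $x$ were needed, I would restate the conclusion as $y_k^{1-\epsilon}=\log^{1-\epsilon}x\cdot(1+o(1))+\Theta(k)$. I would also remark how the bound will be used later: $f^{(k)}(0)=f^{(k-2)}(2)$, so $f^{(k)}(0)\le n$ holds if and only if $y_{k-2}(2)\le\log n$. By Observation~\ref{lem:technical}, which applies since $f(x)\ge 2x$, this gives $T_f(n)=\Theta(\log^{1-\epsilon}n)$.
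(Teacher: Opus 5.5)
Your proof is correct and takes essentially the same route as the paper. Both arguments track $z_k=y_k^{1-\epsilon}$, show that each increment $z_{k+1}-z_k$ lies between two positive constants, and then telescope; the paper gets this from the Mean Value Theorem for $t\mapsto t^{1-\epsilon}$ on $[y_{k-1},y_k]$, while you use concavity and a derivative bound for $u\mapsto(1+u)^{1-\epsilon}$ on $[0,1]$, which is the same estimate after rescaling by $y_k$. Your version has the small advantage of explicit constants $\tfrac{1-\epsilon}{2}$ and $1-\epsilon$, which avoids the paper's muddled bookkeeping in Claim~\ref{claim:3}, where the upper and lower bounds are swapped.
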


\begin{proof}
    Let $g(x) = x^{1-\epsilon}$. By the Mean Value Theorem, for every $a<b$ there exists a constant $\beta\in(a,b)$ such that 
    \[
    f(b)-f(a) = f'(\beta)(b-a)
    \]
    where $f'(\beta) = (1-\epsilon)x^{-\epsilon}$.
    Since $x$ is fixed, we denote $y_k(x)$ simply as $y_k$. We get that
    \begin{align*}
                y_{k}^{1-\epsilon} - y_{k-1}^{1-\epsilon}&= g(y_k) - g(y_{k-1}) &\quad\text{since }y_{k-1}\geq y_{k} \\
                & = (1-\epsilon)\beta_k^{-\epsilon}(y_k-y_{k-1}) &\quad \text{for some }\beta_k\in(y_{k-1},y_k)\\
                &= (1-\epsilon)\beta_k^{-\epsilon}y_{k-1}^{\epsilon} &\quad \text{by def. of }y_k\\
                & = (1-\epsilon)\cdot \left(\dfrac{y_{k-1}}{\beta_k}\right)^{\epsilon}
    \end{align*}
    We use the following result to continue. 

\begin{claim}\label{claim:3}
There exists two constants $C_1,C_2\in\mathbb R_{\geq0}$ such that, for every $k\in\N$, and every $\beta_k\in (y_{k-1}, y_k)$, we have
$
C_1\leq (y_{k-1}/\beta_k)^{\epsilon}\leq C_2.
$
\end{claim}

\begin{proof}
Let $k\in \N$ and $\beta_k\in (y_{k-1}, y_k)$. We have
$
y_{k-1}/\beta_k\leq y_{k-1}/y_{k-1} = 1.
$
Therefore, by using $C_1=1$, we prove the lower bound.
Analogously, we have that  
$
y_{k-1}/\beta_k\geq y_{k-1}/y_{k}
$.
Since $y_k = y_{k-1} + y_{k-1}^\epsilon$, we get 
\[
\dfrac{y_{k-1}}{\beta_k}
\geq \dfrac{y_{k-1}}{y_{k-1} + y_{k-1}^\epsilon}
= \dfrac{y_{k-1}}{y_{k-1}(1 + y_{k-1}^{\epsilon-1})}
= \dfrac{1}{1 + y_{k-1}^{\epsilon-1}}.
\]
Since $y_{k-1}\geq y_1$ as $\{y_k\}_{k\in\N}$ is increasing, and since $\epsilon-1\leq 0$, we get that $y_{k-1}^{\epsilon-1}\leq y_1^{\epsilon-1}$, from which we derive that
\[
\dfrac{1}{1 + y_{k-1}^{\epsilon-1}}\geq \dfrac{1}{1 + y_0^{\epsilon-1}}.
\]
Therefore, by taking $C_2 = \left(\dfrac{1}{1 + y_0^{\epsilon-1}}\right)^{-\epsilon}$ we obtain the desired upper bound. 
\end{proof}
            
Going back to the proof of Claim~\ref{prop:propofF2}, we have that 
\[
y_k^{1-\epsilon}-y_{k-1}^{1-\epsilon}  = (1-\epsilon)\cdot \left(\dfrac{y_{k-1}}{\beta_k}\right)^{\epsilon}.
\]
By Claim \ref{claim:3}, we then obtain that 
$
\hat{C_1}\leq y_k^{1-\epsilon}-y_{k-1}^{1-\epsilon}  \leq \hat{C_2}
$
with $\hat{C_1}=(1-\epsilon)$ and $\hat{C_2} = (1-\epsilon)C_2$.
Since the constants $\hat{C_1}$ and $\hat{C_2}$ are independent of $k$, by using $k$ times these last inequalities we get that 
$\hat{C_1}\cdot k + y_{1}^{1-\epsilon}  \leq y_k^{1-\epsilon}\leq \hat{C_2}\cdot k + y_{1}^{1-\epsilon}$.
Finally, since $y_{1}^{1-\epsilon}$ is constant for a fixed~$x$, we eventually get that 
$
{C_1}'\cdot k^{1/(1-\epsilon)} \leq y_k\leq {C_2}'\cdot k^{1/(1-\epsilon)}  
$
for proper constants $C_1'$ and $C_2'$. It follows that $y_k \in\Theta\left(k^{1/(1-\epsilon)}\right)$, completing the proof of Claim~\ref{prop:propofF2}.
\end{proof}

By Claim \ref{prop:propofF}, we get that 
$2^{y_{k_n}(2)}\leq n$, and by Claim~\ref{prop:propofF2} this later inequality is equivalent to
$2^{k^{1/(1-\epsilon)}}\leq n$, 
from which we conclude that if $k$ is the maximum integer satisfying this inequality, then $k\in\Theta\left(\log^{1-\epsilon} n\right)$, concluding the proof for the complexities $\Theta(\log^{1-\epsilon}n)$.

\medbreak

We finally exhibit a problem with complexity $\Theta(\log^{(k)}n)$ for every integer $k\geq 1$. 
The case $k=1$ is known (and can be also obtained by considering $\Pi_f$ with $f(x)=2x$), and the case $k=2$ was established in \Cref{thm:logloground}. Let $k\ge 3$, and let us consider the function $f\colon\N\to\N$ defined as $f(0) =1, f(1)=2$ and, for all $x\ge2$,

\[
f(x) = {
    \left.
    \begin{array}{c}
       2^{2^{2^{\cdot^{\cdot^{2^{\left(\log^{(k-2)} x\right)^2}}}}}}
    \end{array}
    \right\}
    \text{\small $k-2$ twos}}
\]
That is, $f(x)$ is a tower of $k-2$ 2's, with the topmost exponent equal to $\left(\log^{(k-2)} x\right)^2$. Again, we use Observation~\ref{lem:technical}, we look for the largest $k$ such that
$f^{(k)}(2)\leq n$.
By induction in $j\ge1$, we get that, 
for every integer $x\geq 2$, 

\[
f^{(j)}(x) = {
    \left.
    \begin{array}{c}
       2^{
                    2^{
                        2^{
                            \cdot^{
                                \cdot^{
                                    2^{
                                        \left(\log^{
                                            (k-2)
                                        }
                                        x\right)^{
                                            2^j
                                        }
                                    }
                                }
                            }
                        }
                    }
                }
    \end{array}
    \right\}
    \text{\small $k-2$ twos}}
\]
It follows that the largest $k$ satisfies $k\in\Theta(\log^{(k)}n)$. 
This completes the proof of Theorem~\ref{lemma:complexities3}. 
\section{Complexities in the Ramsey Interval}
\label{sec:Ramsey-gap}

Any version of the Increasing Degree problem $\Pi_f$ can be associated with a round complexity $T_f(n)$ given by Observation~\ref{thm:general_complexity}. 
In the previous section, we showed how it is possible to find specific functions $f$ to exhibit many locally checkable problems with different asymptotic complexities in the Chang-Pettie interval
$[\Omega(\logstar n), O(\log n)]$. 
We now exhibit a systematic way to transform such functions $f$ to functions $g$ whose associated Increasing Degree problem $\Pi_{g}$ have different asymptotic complexities in 
$[\Omega(\logstar \logstar n), O(\logstar n)]$, i.e., in the Ramsey interval. 

\begin{theorem}
\label{thm:automatic_transaltion}
    Let $\Pi_f$ be the \problem associated to a function $f$ satisfying $f(x) \geq 2x$ for all $x\in \N$, with complexity $T_f(n)$.
    There exists a function $g$ such that the associated \problem $\Pi_{g}$ has complexity $T_{g}(n) = \Theta(T_f(\logstar n))$.
\end{theorem}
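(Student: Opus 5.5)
We take $g=\Phi(f)$ as announced in the introduction: $g(x)=\mathrm{tow}(f(\logstar x))$, where $\mathrm{tow}(m)$ is a tower of $m$ twos. We also set $g(0)=1$ and $g(1)=2$ as in Remark~\ref{remark:fix_val}, and if needed we replace the argument $\logstar x$ by $\max\{\logstar x,2\}$ so that the recursion starts cleanly. First we check that $g(x)\geq 2x$ for all $x$. Since $f(y)\geq 2y\geq y+1$, we get $f(\logstar x)\geq \logstar x+1$. A tower of $\logstar x+1$ twos is at least $2^x\geq 2x$, because a tower of $\logstar x$ twos is already at least $x$. Hence Observation~\ref{lem:technical} applies to $g$, and $T_g(n)=\Theta(\max\{k\mid g^{(k)}(0)\leq n\})$. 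The same observation gives $T_f(m)=\Theta(\max\{k\mid f^{(k)}(0)\leq m\})$. This reduces the theorem to comparing the iterates of $g$ with those of $f$.

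The key identity is $\logstar(\mathrm{tow}(m))=m$. From it, $\logstar(g(x))=f(\logstar x)$. Set $a_k=\logstar(g^{(k)}(x_0))$ for a fixed small starting value $x_0$, say $x_0=g^{(2)}(0)$. Induction then gives $a_{k+1}=f(a_k)$, hence $a_k=f^{(k)}(a_0)$ and $g^{(k)}(x_0)=\mathrm{tow}(f^{(k)}(a_0))$. Since $\logstar$ is monotone, $g^{(k)}(x_0)\leq n$ holds iff $\mathrm{tow}(f^{(k)}(a_0))\leq n$, which holds iff $f^{(k)}(a_0)\leq \logstar n$, up to the off-by-one slack of $\logstar$ that we handle next. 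Therefore $\max\{k\mid g^{(k)}(0)\leq n\}$ equals, up to an additive constant from the first few iterates, $\max\{k\mid f^{(k)}(a_0)\leq \logstar n\}$.

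The remaining work, which I expect to be the main obstacle, is controlling the boundary effects. These are the starting point $a_0$ versus $0$, the first iterates where $g$ is fixed by convention rather than by the tower formula, and the discrete slack in $\logstar$, since $\mathrm{tow}(m)\leq n$ iff $m\leq \logstar n$ holds only up to $\pm1$. For the starting point, we sandwich $f^{(k)}(a_0)$ between $f^{(k)}(0)$ and $f^{(k+c)}(0)$ for a constant $c$. This uses monotonicity of $f$, which we assume or enforce by taking $f$ increasing, together with $a_0\leq f^{(c)}(0)$. The maximal $k$ then shifts by at most a constant $c$. For the $\pm1$ slack, we replace $\logstar n$ by $\logstar n\pm1$. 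Since $f(x)\geq 2x$, this changes $\max\{k\mid f^{(k)}(0)\leq m\}$ by at most an additive $O(1)$ when we pass from $m$ to $m+1$ or to $2m$, because one extra application of $f$ at least doubles the value.

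Combining these bounds gives $T_g(n)=T_f(\logstar n)\pm O(1)$, i.e. $\Theta(T_f(\logstar n))$ whenever $T_f$ is unbounded. The bounded case is trivial.
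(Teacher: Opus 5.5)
Your proposal is correct and follows essentially the same route as the paper: the same $g=\Phi(f)$, the identity $\log^\star(g(x))=f(\log^\star x)$ iterated so that $g^{(k)}$ is a tower whose height is an iterate of $f$, and Observation~\ref{lem:technical} applied to both $f$ and $g$ after checking $g(x)\ge 2x$; your handling of the boundary effects (the conventions at $0,1$ and the $\pm1$ slack of $\log^\star$) is more explicit than the paper's. One simplification: you do not need monotonicity of $f$, which is not a hypothesis of the theorem, because $x_0=g^{(2)}(0)=2$ gives $a_0=\log^\star 2=1=f(0)$, hence $f^{(k)}(a_0)=f^{(k+1)}(0)$ exactly, and your slack argument only uses $f(y)\ge 2y$ along the orbit of $0$.
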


\begin{proof}
     The function $f$ is mapped to the function $g\colon\N\to\N$ satisfying
    \[
    \label{tower2}
    g(x) = {
    \left.
    \begin{array}{c}
        2^{2^{\iddots^{2}}}
    \end{array}
    \right\}
    \text{$f(\log^\star x)$ times.}}
    \]
    Note that $\log^\star  (g(x)) = f(\log^\star x)$.
    Then, composing the function $g$ with itself $k$ times for any $k \in \N$, we have that
    \[
    g^{(k)}(x) = {
    \left.
    \begin{array}{c}
        2^{2^{\iddots^{2}}}
    \end{array}
    \right\}
    \text{$f^{(k)}(\log^\star x)$ times}}.
    \]
    Therefore, $g^{(k)}(0)\leq n$ implies $f^{(k)}(0) \leq \logstar n$, which holds whenever $k\in O(T_f(\log^\star n))$, and the largest $k$ satisfying this inequality is in $\Theta(T_f(\log^\star n))$. 

    For hypothesis, $f(x) \geq 2x$. So $\log^\star(g(x)) = f(\log^\star x) \geq 2\log^\star x \geq \log^\star (2x)$, for any $x \geq 2$. Hence, considering Remark~\ref{remark:fix_val}, $g(x) \geq 2x$ for any $x$.
    Then, from Observation~\ref{lem:technical}, it follows that the round complexity of the \problem $\Pi_g$ is $\Theta(T_f(\log^\star(n))$.
 \end{proof}

The proof of Theorem \ref{thm:automatic_transaltion} defines an automatic translator
$\Phi$ which transforms any function $f$ satisfying $f(x)\geq 2x$ to a new function $\Phi(f)$ satisfying
$$
\Phi(f)(x) = {
    \left.
    \begin{array}{c}
        2^{2^{\iddots^{2}}}
    \end{array}
    \right\}
    \text{$f(\log^\star x)$ times}}
$$
which in turn correspond to a \problem $\Pi_{\Phi(f)}$ with complexity $T_{\Phi(f)}(n) = \Theta(T_f(\logstar n))$.
\cref{sec:sublogn} describes infinitely many versions of the \problem, with complexities in the Chang-Pettie interval. 
\Cref{thm:automatic_transaltion} allows us to derive infinitely many new versions of the \problem, with different asymptotic  complexities  in
$[\Omega(\logstar \logstar n), O(\logstar n)]$.

The operator can even be iterated. Indeed, the function $\Phi(f)$ resulting from the application of the operator grows much faster than $f$, and thus it is possible to apply $\Phi$ on $\Phi(f)$ to obtain a \problem $\Pi_{\Phi\circ\Phi(f)}$ with complexity $T_{\Phi\circ\Phi(f)}(n) \in [\Omega(\logstar\logstar\logstar n), O(\logstar\logstar n)]$. 
In general, by iterating this process as many times as desired, we can obtain infinitely many different asymptotic  complexity in the interval
$[\Omega(\log^{\star^{(k+1)}} n), O(\log^{\star^{(k)}}n)]$,
for every $k \in \N$, where 
$\log^{\star^{(k)}}n$ is the function
$\logstar$ applied $k$ successive times starting from $n$, i.e., $\log^\star(\dots(\log^\star(n))\dots)$ with $k$ applications of $\log^\star$.
Formally, combining \Cref{lemma:complexities3} and \Cref{thm:automatic_transaltion}, yields the following.

\begin{corollary}\label{cor:complexities_sublogstar}
   There exist locally checkable problems with the following round complexities in $n$-node graphs:
$\Theta\left(\logstar \logstar n \right)$,
$\Theta\big(\log^{\epsilon}(\logstar n)\big)$ for every $\epsilon\in (0,1)$, and 
$\Theta\big(\log^{(k)}(\logstar n)\big)$ for every integer $k\geq 1$. 
\end{corollary}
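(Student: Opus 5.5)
The plan is to feed each problem $\Pi_f$ constructed in the proof of \Cref{lemma:complexities3} into the translation operator of \Cref{thm:automatic_transaltion}. Concretely, I take the three families used there: $f(x)=2^x$, with $T_f(n)=\Theta(\logstar n)$; $f(x)=x\cdot 2^{\log^{\epsilon'} x}$ with $\epsilon'=1-\epsilon$, with $T_f(n)=\Theta(\log^{\epsilon}n)$; and the tower functions with $T_f(n)=\Theta(\log^{(k)}n)$, together with $f(x)=2x$ for $k=1$ and $f(x)=x^2$ for $k=2$. Applying $\Phi$ to each gives $T_{\Phi(f)}(n)=\Theta(T_f(\logstar n))$. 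This yields $\Theta(\logstar\logstar n)$, $\Theta(\log^{\epsilon}(\logstar n))$ and $\Theta(\log^{(k)}(\logstar n))$ respectively, which are exactly the three families in the corollary.

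The one hypothesis of \Cref{thm:automatic_transaltion} to verify is $f(x)\geq 2x$ for all $x\in\N$. Under \Cref{remark:fix_val} we have $f(0)=1\geq 0$ and $f(1)=2$, so only $x\geq 2$ needs checking.
\begin{itemize}
\item For $2^x$, $x^2$ and $2x$ this is immediate.
\item For the tower functions it is also clear, since a tower of $k-2\geq 1$ twos topped by $(\log^{(k-2)}x)^2$ dominates $2x$ once one unwinds the logarithms; small $x$ can be checked directly or absorbed by redefining $f$ on finitely many values, which changes nothing asymptotically.
\item For $x\cdot 2^{\log^{\epsilon'}x}$, the factor $2^{\log^{\epsilon'}x}$ is at least $2$ as soon as $\log x\geq 1$, i.e.\ for all $x\geq 2$.
\end{itemize}

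The main obstacle is that the asymptotic statement $T_f(n)=\Theta(h(n))$ must survive composition with $\logstar$, i.e.\ we need $T_f(\logstar n)=\Theta(h(\logstar n))$. This follows because the $\Theta$ bounds of \Cref{lemma:complexities3} hold for all sufficiently large arguments and $\logstar n\to\infty$. The subtle point is whether the constants in Claim~\ref{prop:propofF2} are uniform. They are, since they depend only on $\epsilon$ and on the fixed starting value $x=2$.

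Finally, the lower bounds transfer as in \Cref{thm:automatic_transaltion}. That argument relies only on Observations~\ref{thm:general_complexity} and~\ref{lem:technical}, which already encode the spine construction, and those apply to $\Phi(f)$ because $\Phi(f)(x)\geq 2x$ was shown there.
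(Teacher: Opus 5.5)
Your proposal is correct and follows the paper's route exactly. The paper obtains this corollary simply by combining \Cref{lemma:complexities3} with the translation operator of \Cref{thm:automatic_transaltion}, and your check of the hypothesis $f(x)\geq 2x$ makes explicit what the paper leaves implicit.

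One small refinement: for the tower functions the small values genuinely fail. For $k=3$, $f(2)=2^{(\log 2)^2}=2<4$, and the iterates from $0$ would even stall at $2$. So redefining $f$ on finitely many values is necessary rather than optional, though, as you say, it does not affect the asymptotics.
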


More generally, 

\begin{corollary}
\label{cor:complexities_deeply_sublogstar}
   There exist locally checkable problems with the following round complexities in $n$-node graphs, for every integer $r\geq 1$: 
$\Theta\big(\log^{\star^{(r)}} n \big)$,
$\Theta\big(\log^{\epsilon}(\log^{\star^{(r)}} n)\big)$ for every $\epsilon\in (0,1)$, and 
$\Theta\big(\log^{(k)}(\log^{\star^{(r)}} n)\big)$ for every integer $k\geq 1$. 
\end{corollary}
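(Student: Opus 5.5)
The plan is to prove Corollary~\ref{cor:complexities_deeply_sublogstar} by induction on $r\geq 1$, applying the translation operator $\Phi$ of Theorem~\ref{thm:automatic_transaltion} $r$ times to the base functions from Section~\ref{sec:sublogn}. Those base functions are: $f(x)=2^x$, with complexity $\Theta(\log^\star n)$; $f(x)=x\cdot 2^{\log^{1-\epsilon}x}$, with complexity $\Theta(\log^{\epsilon}n)$ after renaming $\epsilon$; and the tower functions, with complexity $\Theta(\log^{(k)}n)$, together with $f(x)=2x$ for $k=1$ and $f(x)=x^2$ for $k=2$. For each base function $f$ with $T_f(n)=\Theta(h(n))$, I will show that the $r$-fold iterate $\Phi^{(r)}(f)$ satisfies
\[
T_{\Phi^{(r)}(f)}(n)=\Theta\big(h(\log^{\star^{(r)}} n)\big).
\]
Each of the three families in the statement then follows by substituting the corresponding $h$.

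First I check the hypothesis $f(x)\geq 2x$ for every base function, using the convention $f(0)=1$, $f(1)=2$ of Remark~\ref{remark:fix_val}. The delicate value is $x=0$, where only $f(0)=1>0$ holds. Observation~\ref{lem:technical} really only needs geometric growth of the iterates $f^{(j)}(0)$, so that the sum in Observation~\ref{thm:general_complexity} is dominated by its last term, and I will argue it in that form. Next I show this hypothesis is preserved by $\Phi$. Theorem~\ref{thm:automatic_transaltion} already establishes $\Phi(f)(x)\geq 2x$ in its proof via $\log^\star(\Phi(f)(x))=f(\log^\star x)\geq 2\log^\star x$. Hence $\Phi^{(r-1)}(f)$ is a legal input to the theorem at every stage of the induction.

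The inductive step applies Theorem~\ref{thm:automatic_transaltion} to $\Phi^{(r-1)}(f)$. This gives $T_{\Phi^{(r)}(f)}(n)=\Theta\big(T_{\Phi^{(r-1)}(f)}(\log^\star n)\big)$, and by the induction hypothesis the right-hand side is $\Theta\big(h(\log^{\star^{(r-1)}}(\log^\star n))\big)=\Theta\big(h(\log^{\star^{(r)}}n)\big)$.

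The main obstacle is that a $\Theta$ bound for $T$ at argument $m$ is being composed with the substitution $m=\log^\star n$. This requires the constants hidden in $\Theta$ to be uniform in $m$, which holds because the bounds hold for all sufficiently large $m$ and $\log^{\star^{(r)}} n\to\infty$. It also requires that $T$ not be sensitive to additive constant shifts in its argument: the proof of the theorem really gives $f^{(k)}(\log^\star 0)$ and towers of height off by $O(1)$. I would handle this by noting that each $h$ in question satisfies $h(m+O(1))=\Theta(h(m))$, and that $T_g$ is monotone in $n$, so perturbing $\log^\star n$ by a constant changes the complexity only by a constant factor. Finally, the lower bounds for trees (and in the stronger models) carry over unchanged, since Observation~\ref{thm:general_complexity} applies to every $\Pi_g$ with its spine construction.
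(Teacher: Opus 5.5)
Your proposal is correct and is essentially the paper's argument. The paper also just iterates the operator $\Phi$ of Theorem~\ref{thm:automatic_transaltion} $r$ times on the base functions of Theorem~\ref{lemma:complexities3}, relying on $\Phi$ preserving $f(x)\ge 2x$; your handling of uniform constants and $O(1)$ shifts in the argument of $\log^\star$ is more careful than the paper's one-line ``combining'' remark.

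One caveat, which the paper glosses over too: your claim that only $x=0$ is delicate fails for the tower functions used for $\log^{(k)}$ with $k\ge 3$. For $k=3$, $f(2)=2^{(\log 2)^2}=2$ is a fixed point, so the iterates $f^{(j)}(0)$ stall at $2$ and $\Phi$ cannot be applied as is. These functions must first be modified on an initial segment, e.g.\ by setting $f(x)=2x$ below a large enough constant; this makes $f(x)\ge 2x$ hold everywhere without changing the $\Theta(\log^{(k)}n)$ complexity.
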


We conclude this section by exhibiting a locally checkable problem with extremely small complexity, still in $\omega(1)$.

\begin{definition}
The \emph{doubly iterated} logarithm, or log-sharp, is the function that, evaluated on a natural number~$n$, returns the number of times the iterated logarithm function, i.e. the function $\logstar$, must be iteratively applied starting from  $n$ until the result is at most~1. Formally, the function $log^\sharp \colon \N \to \N$ is defined as
\[
\log^\sharp (n) :=
  \begin{cases}
    0                  & \mbox{if } n \le 1; \\
    1 + \log^\sharp(\log^\star n) & \mbox{if } n > 1
   \end{cases}
\]
\end{definition}

\begin{theorem}\label{thm:logsharp}
      There exists a locally checkable problem with round complexity $T_f(n) = \Theta (\log^\sharp n)$.
\end{theorem}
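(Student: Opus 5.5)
We will take the Increasing Degree problem $\Pi_f$ with $f$ chosen so that iterating $f$ amounts to iterating the tower function $x\mapsto$ tower of height $x$. Concretely, for $x\geq 2$ we set $f(x)$ to be the tower of 2's of height $x$, writing $f(x)=2\uparrow\uparrow x$, with $f(0)=1$ and $f(1)=2$ as in Remark~\ref{remark:fix_val}. The key identity is $\log^\star(f(x))=x$ for $x\geq 2$, up to an additive constant depending on the convention for $\log^\star$. The function $f$ is the inverse of $\log^\star$ in the same way that the tower underlying $\log^\star$ is the inverse of $\log$; this mirrors how $f(x)=2^x$ gave $\Theta(\log^\star n)$ in the proof of Theorem~\ref{lemma:complexities3}. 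Equivalently, $f$ is obtained from the identity-like function via the translation operator $\Phi$ of Theorem~\ref{thm:automatic_transaltion}, but we prefer a direct argument to avoid circularity: the translation only yields $T_f(\log^\star n)$ and cannot reach a fixed point by itself.

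\textbf{Step 1.} Since $f(x)\geq 2x$ for all $x$ (immediate from the definition, checking $x=0,1,2,3$ by hand), Observation~\ref{lem:technical} applies. Hence $T_f(n)=\Theta(\max\{k\mid f^{(k)}(0)\leq n\})$, and it suffices to estimate $K(n)=\max\{k\mid f^{(k)}(0)\leq n\}$.

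\textbf{Step 2.} Let $a_k=f^{(k)}(0)$, so $a_0=0$, $a_1=1$, $a_2=2$, $a_3=2\uparrow\uparrow 2=4$, and in general $a_{k+1}=2\uparrow\uparrow a_k$. We show by induction that applying $\log^\star$ to $a_{k}$ gives $a_{k-1}$, up to $O(1)$ slack in the base cases. Then $\log^\sharp(a_k)=k+O(1)$, directly from the recursive definition $\log^\sharp(n)=1+\log^\sharp(\log^\star n)$.

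\textbf{Step 3.} We use monotonicity of $\log^\sharp$, which holds since $\log^\star$ is nondecreasing. If $a_k\leq n<a_{k+1}$, then $k+O(1)=\log^\sharp(a_k)\leq\log^\sharp n\leq\log^\sharp(a_{k+1})=k+O(1)$. Hence $K(n)=\log^\sharp n+O(1)$, and Observation~\ref{lem:technical} (together with Observation~\ref{thm:general_complexity} for the upper bound via Algorithm~\ref{alg:generic-code2} and the spine construction for the lower bound, valid on trees) gives $T_f(n)=\Theta(\log^\sharp n)$.

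The main obstacle is the bookkeeping in Step 2: the off-by-one conventions of $\log^\star$, namely whether iteration stops at $\leq 1$ and whether $\log$ is $\log_2$, together with the special values $f(0)=1$ and $f(1)=2$. These must not break the exact identity $\log^\star(2\uparrow\uparrow m)=m$, or must break it only by a bounded additive amount that does not accumulate over the $k$ iterations. We handle this by proving $\log^\star(2\uparrow\uparrow m)=m$ exactly for $m\geq 1$, since $\log_2$ peels one level at a time and $2\uparrow\uparrow 0=1$ stops the count. With this exact identity, the error is confined to the first two or three terms $a_0,a_1,a_2$, giving a uniform $O(1)$ discrepancy.
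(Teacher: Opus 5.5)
Your proof is correct and follows the paper's route. You take $f$ to be the tower function, use that $\log^\star$ exactly inverts it so that $\log^\star(f^{(k)}(0))=f^{(k-1)}(0)$, and read off $\log^\sharp n$ via Observation~\ref{lem:technical}; your normalization $f(x)=2\uparrow\uparrow x$ (matching Remark~\ref{remark:fix_val}) and your sandwich $a_k\le n<a_{k+1}$ also clean up the off-by-one slippage in the paper's version, whose choice $f(0)=2$ gives a tower of height $x+1$.

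Your parenthetical claim that $f$ arises from an identity-like function via $\Phi$ is not accurate. Since $\Phi(f)(x)$ is a tower of height $f(\log^\star x)$, the tower function is essentially a fixed point of $\Phi$ rather than $\Phi$ applied to the identity. Nothing in your argument depends on that remark.
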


\begin{proof}
Let $f : \mathbb{N} \to \mathbb{N}$ defined as
\[
f(x)= 
\begin{cases}
2 & \text{if } x = 0, \\
2^{f(x-1)} & \text{if } x \geq 1.
\end{cases}
\]
Or, more informally, the function $f$ is defined as
\[
    f(x) = {
    \left.
    \begin{array}{c}      2^{2^{\iddots^{2}}}
    \end{array}
    \right\}
    \text{$x$ times}}
\]
Looking at this latter form, we immediately see that $\logstar (f(x)) = x$. Composing $f$ with itself we obtain:
\[
        f^{(2)}(x) = {
        \left.
        \begin{array}{c}
            2^{2^{\iddots^{2}}}
        \end{array}
        \right\}
        \text{$f(x)$ times}}
\]
and we have $\logstar \logstar (f^{(2)}(x)) = x$. More generally, for any integers $k$ and $x \geq 1$, it holds that
\[
\log^{\star(k)} (f^{(k)}(x)) = x
\]
and
$\log^{\star(k)} (f^{(k)}(0)) = 2$.
As usual now, thanks to Observation~\ref{lem:technical}, we look for the largest integer $k$ satisfying 
$n = \Theta \big( f^{(k)} (0) \big)$.
Applying $k$ times the iterated logarithm to the previous equation we obtain
\[
    \log^{\star(k)} (n) = \log^{\star(k)} (f^{(k)}(0))
    = 2.
\]
Therefore, $k = \Theta(\log^\sharp n) $, as claimed.
\end{proof}

\section{Filling up Polylogarithmic Complexity Gaps}
\label{sec:gaps-in-trees}

In this section, we study round complexities that fill up the known gaps for LCL problems on bounded-degree trees in the polynomial and polylog regimes. The result for the polynomial regime had actually already been established in~\cite{schmid2026lcls}. We provide a new construction, whose interest is that it can be leveraged for deriving new complexities in the $\mbox{polylog}^\star$ regime in graphs (cf. Corollary~\ref{cor:translog2}). 

\begin{theorem}\label{thm:nsandlogs}
The following statement holds.
\begin{enumerate}
        \item For every real $s>2$, there exists a function $f_s\colon\N\to\N$ such that the round complexity of the \problem $\Pi_{f_s}$ is $\Theta(n^{1/s})$.
        
        \item For every real $\gamma>1$, there exists a function $f_\gamma\colon\N\to\N$ such that the round complexity of the \problem $\Pi_{f_\gamma}$ is $\Theta(\log^{\gamma }n)$. 
\end{enumerate}
\end{theorem}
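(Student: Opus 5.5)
Both parts follow the template of \Cref{lemma:complexities3}. We pick $f$, compute the iterates $f^{(j)}(0)$ (up to constants), and read off the complexity from Observation~\ref{thm:general_complexity}. Since the target complexities here exceed $\log n$, the iterates grow subexponentially in $j$, so Observation~\ref{lem:technical} does not apply: $f(x)\ge 2x$ fails, and the whole sum $\sum_{j\le k} f^{(j)}(0)$ matters, not just its last term. The lower bounds for trees come for free from the spine construction of \Cref{thm:logloground}, with $f^{(i)}$ leaves attached to $v_i$. The upper bound is Algorithm~\ref{alg:generic-code2}.

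\textbf{Part 1.} Take $f_s(x)=x+\lceil x^{\alpha}\rceil$ with $\alpha=(s-2)/(s-1)\in(0,1)$. (For $x\in\{0,1\}$ we use Remark~\ref{remark:fix_val}.) Set $d_j=f_s^{(j)}(0)$, so $d_{j+1}-d_j=\Theta(d_j^{\alpha})$. I would mimic the mean value argument of Claim~\ref{prop:propofF2}, applied to $g(y)=y^{1-\alpha}$. It gives $d_{j+1}^{1-\alpha}-d_j^{1-\alpha}=\Theta(1)$, uniformly in $j$, because $d_{j+1}/d_j\to 1$. Hence $d_j=\Theta(j^{1/(1-\alpha)})=\Theta(j^{s-1})$. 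Then $\sum_{j\le k}d_j=\Theta(k^{s})$, and Observation~\ref{thm:general_complexity} yields $T_{f_s}(n)=\Theta(n^{1/s})$. The boundary case $s=2$ corresponds to $\alpha=0$, i.e.\ $f(x)=x+1$, which is consistent with \Cref{thm:sqrt_n}.

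\textbf{Part 2.} Take $f_\gamma(x)=x+\lceil x/\log^{\beta}x\rceil$ for $x\ge 2$, with $\beta=(\gamma-1)/\gamma\in(0,1)$. In logarithmic coordinates, $y_j=\log d_j$ satisfies $y_{j+1}=y_j+\log(1+y_j^{-\beta})=y_j+\Theta(y_j^{-\beta})$. The same mean value argument, now with $g(y)=y^{1+\beta}$, gives $y_{j+1}^{1+\beta}-y_j^{1+\beta}=\Theta(1)$. Hence $y_j=\Theta(j^{1/(1+\beta)})$, i.e.\ $d_j=2^{\Theta(j^{1/(1+\beta)})}$.

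The sum $\sum_{j\le k}d_j$ is at least its last term and at most $k\,d_k$. The factor $k$ only shifts the exponent by $O(\log k)=o(k^{1/(1+\beta)})$. So the condition $\sum_{j\le k} d_j\le n$ becomes $k^{1/(1+\beta)}=\Theta(\log n)$, i.e.\ $k=\Theta(\log^{1+\beta}n)$. We then choose $\beta$ so that $1+\beta=\gamma$, that is, $\beta=\gamma-1$. This is fine for $\gamma<2$. For larger $\gamma$ the increment $x/\log^\beta x$ with $\beta\ge1$ still works, since the recurrence analysis only needs $\beta>0$. (The value $\beta=(\gamma-1)/\gamma$ written above should be replaced by $\beta=\gamma-1$.)

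\textbf{Main obstacle.} The delicate step is the $\Theta(\cdot)$ control on the recurrences. There are two difficulties:
\begin{itemize}
\item The multiplicative constants hidden in the ceilings and in $\log(1+u)=\Theta(u)$ must be absorbed into the uniform bounds $C_1,C_2$, in the same way as Claim~\ref{claim:3} does.
\item The smallest values of $x$ must be handled so that the asymptotics start at a fixed constant depending only on $s$ or $\gamma$.
\end{itemize}
Once the per-step increment of $g(y_j)$ is pinned between two positive constants, telescoping gives both the upper and the lower bound on $d_j$. The final inversion of $\sum_j d_j\le n$ is then routine.
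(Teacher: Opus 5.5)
Your proof is correct, but it takes a different route from the paper.

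\textbf{The paper's route.} It goes through an inverse ``realization'' lemma (Lemma~\ref{prop:alphaj}). It prescribes a convex sequence $(\alpha_j)$ that plays the role of the successive increments along the orbit of $0$. It realizes this sequence by $f(x)=x+g(x)$, with $g$ a step function. It is then left only with estimating $\sum_{j<k}(k-j)\alpha_j$: this is $\Theta(k^s)$ for $\alpha_j=(4+j)^{s-2}$, and $\Theta(k^{2-2/\gamma}2^{k^{1/\gamma}})$ for $\alpha_j=2^{j^{1/\gamma}}$.

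\textbf{Your route.} You fix closed-form functions $f_s(x)=x+\lceil x^{(s-2)/(s-1)}\rceil$ and $f_\gamma(x)=x+\lceil x/\log^{\gamma-1}x\rceil$. You then control their orbits directly with the mean-value telescoping of Claim~\ref{prop:propofF2}, applied to $y\mapsto y^{1/(s-1)}$ and to $y\mapsto y^{\gamma}$ with $y=\log d_j$. Both computations check out. You get $d_j=\Theta(j^{s-1})$ and $\log d_j=\Theta(j^{1/\gamma})$. The constant hidden in that exponent only costs a constant factor in $k$, because $k$ is a fixed power of $\log n$.

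\textbf{What each buys.} Your approach gives explicit formulas and reuses a technique already in the paper, with no need to realize a prescribed sequence exactly as an orbit of $f$. The paper's approach is more generic: any target whose associated sum can be inverted is realizable without solving a recurrence. Also, the convexity hypothesis makes its step function $g$ nondecreasing, so $f(x+1)\ge f(x)+1$ comes for free. That is the hypothesis of Theorem~\ref{thm:translator_two}, which Corollary~\ref{cor:translog2} needs. Your $f_\gamma$ violates it at $x=2$ when $\gamma$ is large: $f_\gamma(2)=f_\gamma(3)=4$ as soon as $\log^{\gamma-1}3\ge 3$. Changing finitely many small values repairs this.

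\textbf{Write-up fixes.} State $\beta=\gamma-1$ from the outset. The value $(\gamma-1)/\gamma$ would give $\Theta(\log^{(2\gamma-1)/\gamma}n)$, and no case distinction at $\gamma=2$ is needed, since only $\beta>0$ is used. Also record that your functions are nondecreasing. Observation~\ref{thm:general_complexity} tacitly needs this to get $\ldeg(v_i)\ge f^{(i)}(0)$ along a heavy chain that starts at an arbitrary leaf-degree. For $f_\gamma$, monotonicity holds because $x/\log^{\gamma-1}x$ has derivative of absolute value below $1$ for $x\ge 3$, and $f_\gamma(3)\ge 4=f_\gamma(2)$.
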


In contrast to the previous sections, it is not direct to specify a function $f$ whose associated problem $\Pi_f$ attains a desired round complexity. 
Instead, we adopt an inverse approach: given a target complexity $T(n) \in [\omega(\log n), o(\sqrt{n})]$, we construct a function $f$ such that $T_f(n) \in \Theta(T(n))$. 
We formalize this approach in the following technical lemma.

\begin{lemma}\label{prop:alphaj}
Suppose that there exists an increasing sequence $(\alpha_j)_{j\ge 1}\subseteq \mathbb{R}_{>0}$ satisfying the following conditions.
\begin{itemize}
    \item $\alpha_j \geq 5$ for every $j\geq 1$,
    \item $\alpha_{j+1}-\alpha_{j}\ge \alpha_{j}- \alpha_{j-1}$ for all $j\ge 2$, and
    \item there exists a function $T\colon\N\to\N$ such that, for every $n\in\N$, the largest integer $k$ satisfying $\sum_{j=0}^{k-1}(k-j)\alpha_j \leq n$ satisfies $k\in \Theta\left(T(n)\right)$
    \end{itemize}
Then there exists a function $f\colon\N\to\N$ such that the round complexity of the \problem $\Pi_{f}$ is $\Theta(T(n))$.
\end{lemma}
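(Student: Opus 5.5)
We will construct $f$ so that the leaf-degree increments along a heavy chain follow the prescribed sequence. Concretely, take the increments $\alpha_j$, rounded to integers, as the amounts added at each step: set $a_0=0$ and $a_k=\sum_{j=0}^{k-1}\lceil\alpha_j\rceil$ (with $\alpha_0$ defined, e.g., as $\alpha_1$, or handled as a constant offset). We then want $f(a_k)=a_{k+1}$ for every $k$, so that $f^{(k)}(0)=a_k$. The problem is that $f$ must be defined on all of $\N$, and a chain may start at an arbitrary leaf-degree $d_0$ rather than at $0$. We therefore extend $f$ monotonically: for $a_k\le x<a_{k+1}$ set $f(x)=x+\lceil\alpha_k\rceil$. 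Since $(\alpha_j)$ is increasing, $f$ is nondecreasing, $f(x)>x$, and $f(x)-x$ is nondecreasing in $x$. Here the hypothesis $\alpha_j\ge 5$ ensures integer rounding changes each increment by at most a constant factor, and it keeps $f(0),f(1)$ compatible with Remark~\ref{remark:fix_val} up to constants. Alternatively, one can simply note that the first two values only shift the count by $O(1)$.

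For the lower bound we use the tree of the proof of Theorem~\ref{thm:logloground}. Take a spine $v_0,\dots,v_k$ where $v_i$ has $f^{(i)}(0)=a_i$ leaves (attaching a dummy leaf to $v_0$ if needed for it to be a non-leaf). Each $v_i$, $i<k$, is unbalanced with heavy neighbor $v_{i+1}$. The total size is $(k+1)+\sum_{i\le k}a_i=(k+1)+\sum_{j}(k-j)\lceil\alpha_j\rceil$ after exchanging the order of summation. This matches the quantity in the third hypothesis up to constant factors in $n$. By Observation~\ref{thm:general_complexity}, the complexity is at least the largest $k$ with this sum at most $n$. This is $\Theta(T(n))$, provided $T(\Theta(n))=\Theta(T(n))$; we expect to derive this from the convexity condition $\alpha_{j+1}-\alpha_j\ge\alpha_j-\alpha_{j-1}$ and the $+k$ term being dominated since $\alpha_j\ge5$.

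For the upper bound, Algorithm~\ref{alg:generic-code2} terminates within the length of the longest heavy chain, so we must show any heavy chain $v_0,\dots,v_k$ with arbitrary starting leaf-degree $d_0$ uses at least about $\sum_j(k-j)\alpha_j$ nodes. Heaviness gives $d_{i+1}\ge f(d_i)$, and by monotonicity $d_i\ge f^{(i)}(d_0)$. The key step, and the main obstacle, is a comparison lemma: the increments $f^{(i+1)}(d_0)-f^{(i)}(d_0)$ are nondecreasing and at least $\lceil\alpha_i\rceil$. This holds because $f(x)-x$ is nondecreasing and $f^{(i)}(d_0)\ge a_i$. Hence $\sum_{i\le k}d_i\ge\sum_{i\le k}a_i$, so starting higher never helps. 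We would prove this by induction on $i$, showing $f^{(i)}(d_0)\ge f^{(i)}(0)=a_i$ via monotonicity of $f$. Combining this with the size bound $n$ and the third hypothesis gives $k=O(T(n))$, which completes the argument. We expect the convexity hypothesis to be used only for the regularity of $T$ under constant rescaling of $n$ and rounding.
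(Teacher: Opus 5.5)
Your proposal is correct and follows essentially the paper's route: write $f(x)=x+g(x)$ with $g$ piecewise constant so that the iterates of $f$ from the bottom advance by the prescribed $\alpha_j$, then combine Observation~\ref{thm:general_complexity} with the exchange of summation $\sum_{i\le k}f^{(i)}(0)=\sum_{j<k}(k-j)\alpha_j$ and the third hypothesis. Your bookkeeping is tidier than the paper's in three places: the integer rounding, the monotonicity argument $d_i\ge f^{(i)}(0)$ for an arbitrary starting value $d_0$, and flagging the rescaling issue. That rescaling step needs no convexity: $S_k=\sum_{i=1}^{k}\sum_{j<i}\alpha_j$ is a sum of nondecreasing terms, so $S_{2m}\ge 2S_m$, and hence $K(2n)\le 2K(n)+1$ for $K(n)$ the largest $k$ with $S_k\le n$. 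The only slip is that a dummy leaf on $v_0$ gives $\ldeg(v_0)=1$ and $f(1)=a_1+1$, so $v_0$ becomes balanced; this merely costs one step.
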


\begin{proof}
    Let $(\alpha_j)_{j\ge1}$ be a  sequence satisfying the conditions of the statement. Let us consider functions of the form $f(x) = x+ g(x)$. Our objective is to determe $g$ such that $\Pi_{f}$ has round complexity $\Theta((T(n))$.
    To apply Observation~\ref{thm:general_complexity}, we need the values of $f^{(i)}(0)$ for all $i$. Let $(\beta_i)_{i\ge1}$ defined as
    $\beta_1 = 2+ g_s(2)$, and $\beta_i = \beta_{i-1} + g(\beta_{i-1})$ for all $i\geq2$.
Thanks to a rearrangement of the sum, it holds that 
\begin{equation}\label{eq:equivf}
    (k+1)+\sum_{i=1}^{k}f^{(i)}(0) \leq n\iff (k+4)+\sum_{j=0}^{k-1}((k-2)-j)\beta_j \leq n. 
\end{equation}
We set $\beta_j = \alpha_j$ to derive the function $g$ defined as $g(0)=g(1)=0$, $g(2)=\alpha_1-2$, and,
for every $j\ge2$,
$
g(\alpha_j)=\alpha_j-\alpha_{j-1}.
$
We extend $g$ to all $n\in\N$ by defining $g(n)=g(\alpha_j)$, for every $j\ge1$, and every $n$ satisfying $\alpha_j \le n < \alpha_{j+1}$.
Now consider $f(x) = x + g(x)$ for this function $g$. By the first two conditions satisfied by the sequence $(\alpha_j)_{j\in\N}$ we have that $f(x)> x$ for all $x\in\N$, and $f$ is increasing. Furthermore, the equivalence of Eq.~\eqref{eq:equivf} holds. By the last condition satisfied by the sequence $(\alpha_i)_{i\geq 1}$, we can infer
that, for any integer~$n$, the desired largest integer $k$  is in $\Theta\left(T(n)\right)$. This is because, since $\alpha_i\ge 1$ for all $i$, we have that the expression $(k+4)+\sum_{j=0}^{k-1}((k-2)-j)\alpha_j$ belongs to $\Theta\big(\sum_{j=0}^{k-1}(k-j)\alpha_j\big)$. 
By Observation~\ref{thm:general_complexity},  the round complexity of $\Pi_{f}$ is in $\Theta\left(T(n)\right)$, as claimed.
\end{proof}

We have all the ingredients to prove \Cref{thm:nsandlogs}. 

\begin{proof}[Proof of \Cref{thm:nsandlogs}]
Let us first prove the first item of the statement. 
Let $s\ge2$. By Lemma \ref{prop:alphaj}, it suffices to exhibit an increasing sequence $(\alpha_j)_{j\ge1}$ satisfying the hypotheses of the lemma, with $T(n) = \Theta(n^{1/s})$.
Let $\alpha$ be the sequence defined as $\alpha_j = (4+j)^{s-2}$ for every integer $j\geq 1$. Since $s\ge2$, we have $\alpha_j\geq 5$ for all $j$, and $\alpha_{j+1}-\alpha_j\ge \alpha_j - \alpha_{j-1}$ for all $j\ge2$. Finally, since \(\int_{a}^{b} x^t \,dx=\frac{1}{t+1}(b^{t+1}-a^{t+1})\), we have 
\[
\sum_{j=1}^k (k-j)j^{s-2}\in\Theta\left(k^{s}\right).
\]
We get that, for every $n$, the largest integer satisfying the third condition in the statement of Lemma~\ref{prop:alphaj} is in $\Theta\left(n^{1/s}\right)$. By Lemma \ref{prop:alphaj} we conclude that there exists a function $f_s$ such that the round complexity of the \problem $\Pi_{f_s}$ is $\Theta\left(n^{1/s}\right)$.

\medskip 

We move on with proving the second item of the statement. 
Fix $\gamma>1$ be a real. By Lemma~\ref{prop:alphaj}, it is sufficient to exhibit an increasing sequence
$(\alpha_j)_{j\ge1}$ satisfying with $T(n) = \Theta((\log n)^{\gamma})$,
Let us define $\alpha_j = 2^{j^{1/\gamma}}$ for every $j\ge1$.
Up to exchanging the first two initial values, we can assume w.l.o.g. that
$\alpha_j\ge 5$ for every $j$ and
$
    \alpha_{j+1}-\alpha_j \ge \alpha_j-\alpha_{j-1}
$
for every $j\ge2$. This does not affect the asymptotic estimates below.
For every $k$, let 
\[
    S_k = \sum_{j=1}^{k}(k-j)2^{j^{1/\gamma}}.
\]

\begin{claim}
$S_k \in \Theta
\left(k^{2-21/\gamma}2^{k^{1/\gamma}}\right)$. 
\end{claim}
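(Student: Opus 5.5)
We read the exponent in the claim as $2-2/\gamma$, which is what the heuristic below predicts; the plan is to prove $S_k\in\Theta\bigl(k^{2-2/\gamma}2^{k^{1/\gamma}}\bigr)$. Reindex by $m=k-j$, so that
\[
S_k=\sum_{m=0}^{k-1} m\,2^{(k-m)^{1/\gamma}} = 2^{k^{1/\gamma}}\sum_{m=0}^{k-1} m\,2^{-\delta_m},\qquad \delta_m=k^{1/\gamma}-(k-m)^{1/\gamma}.
\]
The claim then reduces to showing $\sum_{m} m\,2^{-\delta_m}=\Theta(w^2)$, where $w=k^{1-1/\gamma}$ is the natural window width. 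Heuristically, $\delta_m$ grows linearly in $m$ at rate $\approx\frac{1}{\gamma}k^{1/\gamma-1}=\frac{1}{\gamma w}$. The sum therefore behaves like $\sum_m m\,e^{-cm/w}\approx w^2$. Since $\gamma>1$, we have $w\to\infty$, so this is meaningful for large $k$; small $k$ only affects constants.

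For the upper bound, we use concavity of $x\mapsto x^{1/\gamma}$. By the Mean Value Theorem, $\delta_m=\frac{m}{\gamma}\beta^{1/\gamma-1}$ for some $\beta\in(k-m,k)$. Since $1/\gamma-1<0$, this gives $\delta_m\ge \frac{m}{\gamma}k^{1/\gamma-1}=\frac{m}{\gamma w}$. Hence
\[
\sum_{m\ge0} m\,2^{-\delta_m}\le \sum_{m\ge 0} m\,q^m \quad\text{with}\quad q=2^{-1/(\gamma w)}.
\]
The right-hand side equals $q/(1-q)^2$. Because $1-q=\Theta(1/w)$ as $w\to\infty$, this is $O(w^2)$.

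For the lower bound, we keep only the terms $0\le m\le w$, and we need $w\le k/2$. This holds for large $k$ because $w=k^{1-1/\gamma}=o(k)$. For such $m$, the same Mean Value Theorem identity with $\beta\ge k-m\ge k/2$ gives
\[
\delta_m\le \frac{m}{\gamma}(k/2)^{1/\gamma-1}\le \frac{2^{1-1/\gamma}}{\gamma}\cdot\frac{m}{w}\le 2.
\]
So each retained term is at least $m/4$, and the partial sum is at least $\frac14\sum_{m\le w}m=\Omega(w^2)$. Combining the two bounds yields $S_k=\Theta\bigl(k^{2-2/\gamma}2^{k^{1/\gamma}}\bigr)$.

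The main obstacle is making the window estimate uniform: the linearization of $\delta_m$ must be one-sided, from concavity, for the upper bound, and restricted to $m\le k/2$ for the lower bound. Once the claim is established, the rest of the argument is routine. Solving $k^{2-2/\gamma}2^{k^{1/\gamma}}\le n$ gives $k^{1/\gamma}=\Theta(\log n)$, because the polynomial prefactor only changes $\log n$ by an additive $O(\log\log n)$. The largest such $k$ is therefore $\Theta(\log^\gamma n)$, and Lemma~\ref{prop:alphaj} finishes the second item.
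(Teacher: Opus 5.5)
Your proof is correct and follows essentially the same route as the paper. You reindex by $m=k-j$, use concavity of $x\mapsto x^{1/\gamma}$ via the Mean Value Theorem to get $\delta_m\ge m/(\gamma k^{1-1/\gamma})$ for an $O(k^{2-2/\gamma})$ upper bound, and restrict to the window $m\le k^{1-1/\gamma}$, where $\delta_m$ is bounded by a constant, for the matching lower bound. You also rightly read the exponent as $2-2/\gamma$, which is what the paper's proof actually establishes, and you make the constants more explicit than the paper does (the closed form $q/(1-q)^2$ and the condition $w\le k/2$).
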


\begin{proof}
For the upper bound, write $j=k-r$. Since $x\mapsto x^{1/\gamma}$ is concave, it follows from the mean value theorem
that there exists a constant $c>0$ such that, for every $0\le r\le k$,
\[
    k^{1/\gamma}-(k-r)^{1/\gamma} \ge c\,\frac{r}{k^{1-{1/\gamma}}} .
\]
As a consequence, 
\[
    2^{(k-r)^{1/\gamma}}
    \le 2^{k^{1/\gamma}}\exp\!\left(-c'\frac{r}{k^{1-{1/\gamma}}}\right)
\]
for some constant $c'>0$, and thus
\[
S_k
= \sum_{r=0}^{k-1} r\,2^{(k-r)^{1/\gamma}} 
\leq 2^{k^{1/\gamma}}\sum_{r=0}^{k-1}
        r\cdot\exp\left(-c'\frac{r}{k^{1-{1/\gamma}}}\right) 
\leq C\,2^{k^{1/\gamma}}k^{2-2{1/\gamma}}.
\]
For the lower bound, let us restrict the sum to
$1\le r\le c_0 k^{1-{1/\gamma}}$, where $c_0>0$ is a sufficiently small constant.
For such $r$, we have
$
    (k-r)^{1/\gamma} \ge k^{1/\gamma}-C
$
for some constant $C>0$. Thus
$
    2^{(k-r)^{1/\gamma}} \ge c\,2^{k^{1/\gamma}},
$
and consequently
\[
S_k
\ge \sum_{r=1}^{\lfloor c_0 k^{1-{1/\gamma}}\rfloor}
        r\,2^{(k-r)^{1/\gamma}} 
\ge c\,2^{k^{1/\gamma}}
        \sum_{r=1}^{\lfloor c_0 k^{1-{1/\gamma}}\rfloor} r 
\ge c'\,2^{k^{1/\gamma}}k^{2-2{1/\gamma}}.
\]
This completes the proof of the claim. 
\end{proof}

Since the polynomial factor $k^{2-2{1/\gamma}}$ is negligible compared with
$2^{k^{1/\gamma}}$, the largest integer $k$ such that $S_{k}\le n$ satisfies
$
    k^{1/\gamma} = \Theta(\log n),
$
and thus
$
    k = \Theta\bigl((\log^\gamma n)\bigr).
$
By Lemma \ref{prop:alphaj}, there exists a function $f_{\gamma}\colon\N\to\N$ such that the round complexity of $\Pi_{f_{\gamma}}$ is
$
    \Theta\bigl(\log^{\gamma} n\bigr)
$, which completes the proof of Theorem~\ref{thm:nsandlogs}.
\end{proof}

Analogously to \Cref{thm:automatic_transaltion}, one can transfer the round complexities in the range $[\omega(\log n), o(n)]$ (obtained in \Cref{thm:nsandlogs}) to the range $[\omega(\log\log^\star n), o(\log^\star n)]$. However, \Cref{thm:automatic_transaltion} cannot be applied directly in our setting. Indeed, the translation mechanism described there is only valid for functions satisfying $f(x)\ge 2x$, whereas the functions constructed in \Cref{thm:nsandlogs} satisfy $f(x)< 2x$. Therefore, now we describe a new translator that enables us to transfer our desired complexities.

\begin{theorem}\label{thm:translator_two}
    Let $\mathcal F$ be the family of all increasing functions $f\colon\N\to\N$ such that $f(x)\ge x$ and $f(x+1)\ge f(x) +1$. There exists an operator $\Phi_2\colon\mathcal F\to \mathcal F$ with the following property: For every function $f\in\mathcal F$ whose associated \problem $\Pi_f$ has round complexity $T_f(n)\in\omega(\log n)$,  the round complexity of the \problem $\Pi_{\Phi_2(f)}$ is $\Theta\left(T_f(\log^\star n)\right)$.
\end{theorem}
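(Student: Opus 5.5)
We need an operator $\Phi_2$ for slowly growing $f$ (with $f(x)<2x$, so Observation~\ref{lem:technical} does not apply directly). The natural choice again conjugates $f$ by tower functions. Write $\mathrm{tow}(m)$ for a tower of $m$ twos, so that $\log^\star(\mathrm{tow}(m))=m$. We define
\[
\Phi_2(f)(x)=\mathrm{tow}\big(f(\log^\star x)\big)
\]
for $x\geq 2$, keeping the conventions of Remark~\ref{remark:fix_val} at $0,1$. Should ties at equal $\log^\star$ values cause trouble, we would instead use $\Phi_2(f)(x)=x+\mathrm{tow}(f(\log^\star x))$. Set $g=\Phi_2(f)$.

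\textbf{Step 1: $g\in\mathcal F$, and $g$ grows fast.} Because $f(x+1)\ge f(x)+1$ and $f(x)\ge x$, the tower $\mathrm{tow}(f(\log^\star x))$ is at least $\mathrm{tow}(\log^\star x)$. This tower lies between roughly $\log x$ and $x$, so the additive variant is the safe one: it gives $g(x)\ge x+1$ and monotonicity immediately, and $g(x+1)\ge g(x)+1$ follows from $f$ increasing. Moreover, whenever $f(m)\ge m+1$, we have $\mathrm{tow}(f(\log^\star x))\ge \mathrm{tow}(\log^\star x+1)\ge 2^{\mathrm{tow}(\log^\star x)}$, which is at least $x$ up to constants. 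Hence $g(x)\ge 2x$ eventually, and Observation~\ref{lem:technical} applies to $g$: the complexity is $T_g(n)=\Theta(\max\{k: g^{(k)}(0)\le n\})$.

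\textbf{Step 2: iterate.} The additive $x$ term changes $\log^\star$ by at most $1$. So by induction on $k$,
\[
f^{(k)}(c)-O(1)\le \log^\star g^{(k)}(c')\le f^{(k)}(c)+O(1)
\]
for suitable constants $c,c'$. The key point is that the errors do not accumulate. Since $g^{(k)}$ is a tower dominating its argument, $\log^\star(x+\mathrm{tow}(m))=m$ or $m+1$. We then use $f(m+1)\ge f(m)+1$ in the other direction by bounding $f(m+1)\le f(f(m))$, or we absorb the error by shifting the starting point. Consequently, $g^{(k)}(0)\le n$ iff $f^{(k)}(O(1))\le \log^\star n$ up to an additive constant.

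\textbf{Step 3: relate to $T_f$.} Here we cannot drop the sum in Observation~\ref{thm:general_complexity} for $f$, because $f$ is slow. Instead, set $K=\max\{k: f^{(k)}(0)\le m\}$. The sum $\sum_{j\le k} f^{(j)}(0)$ is between $f^{(k)}(0)$ and $(k+1)f^{(k)}(0)$. Since $T_f(m)=\omega(\log m)$, we want to show that $K=\Theta(T_f(m))$. The upper bound $T_f(m)\le K$ is immediate. For the lower bound, use the fact that $f^{(j)}(0)\ge j$ and that $f$ is increasing. Then $T_f(m)\ge \max\{k: k\,f^{(k)}(0)\le m\}$, while $K\le$ the corresponding value with $m$ replaced by $m\cdot K$. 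Comparing $T_f(mK)$ with $T_f(m)$ requires a mild regularity, which we would extract from $T_f(n)=\omega(\log n)$ together with convexity of $f$-orbits, as in the sequences of Lemma~\ref{prop:alphaj}. This is the main obstacle; I expect it to need $T_f(m\cdot\mathrm{poly}(m))=\Theta(T_f(m))$. If that cannot be derived in general, I would use it for the concrete $f_s,f_\gamma$ of Theorem~\ref{thm:nsandlogs}, where it holds because $T$ is polynomial or polylogarithmic.

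Combining Steps 1–3 with $m=\log^\star n\pm O(1)$ gives $T_g(n)=\Theta(T_f(\log^\star n))$.
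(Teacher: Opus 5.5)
Steps 1--2 essentially prove one thing: your $g=\mathrm{tow}\circ f\circ\log^\star$, which is exactly the operator $\Phi$ of Theorem~\ref{thm:automatic_transaltion}, has complexity $\Theta(K_f(\log^\star n))$, where $K_f(m)=\max\{k : f^{(k)}(0)\le m\}$. Step 3 then needs $K_f=\Theta(T_f)$. This is false in exactly the slow regime the theorem is about, which is why Theorem~\ref{thm:automatic_transaltion} cannot simply be reused here.

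The simplest counterexample is $f(x)=x+1$. It lies in $\mathcal F$ and has $T_f(n)=\Theta(\sqrt n)=\omega(\log n)$. Your operator gives $g^{(k)}(x)=\mathrm{tow}(\log^\star x+k)$, hence $T_g(n)=\Theta(\log^\star n)$, whereas the statement requires $\Theta(\sqrt{\log^\star n})$.

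The regularity $T_f(m\cdot\mathrm{poly}(m))=\Theta(T_f(m))$ you hope to extract fails for every polynomial $T_f$: for instance $T_f(m)=m^{1/s}$ gives $T_f(m^2)=m^{2/s}$. So your fallback to the concrete $f_s$ of Theorem~\ref{thm:nsandlogs} fails too. Write $S_k=\sum_{j\le k}f^{(j)}(0)$. Since the orbit $f^{(j)}(0)$ is increasing, $S_k/k\le f^{(k)}(0)\le S_{2k}/k$. Hence $T_{f_s}(m)=\Theta(m^{1/s})$ forces $f_s^{(k)}(0)=\Theta(k^{s-1})$, and your operator yields $\Theta((\log^\star n)^{1/(s-1)})$ instead of $\Theta((\log^\star n)^{1/s})$. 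Only the polylogarithmic family $f_\gamma$ survives, since there $K_f=\Theta(T_f)$.

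The missing idea is to put the partial sums, not the orbit values, into the tower height. The quantity constrained in Observation~\ref{thm:general_complexity} is $S_k$. No conjugate $\mathrm{tow}\circ f\circ\log^\star$ can track it, because $S_{i+1}\neq f(S_i)$ in general.

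The paper sets $\beta_i=\mathrm{tow}(S_i)$ and defines $g$ along the orbit as a step function: $g(x)=\beta_{i+1}$ on $[\beta_i,\beta_{i+1})$ and $g(x)=\beta_1$ on $[0,\beta_1)$. Then $g^{(k)}(0)=\beta_k$ exactly. Because the $\beta_k$ are towers, $\sum_{i\le k}g^{(i)}(0)$ is dominated by its last term. So $g^{(k)}(0)\le n$ becomes $S_k\le\log^\star n$ up to $\pm O(1)$. This is the defining condition of $T_f(\log^\star n)$ up to the lower-order $(k+1)$ term, and the remaining constant-factor slack is absorbed by $S_{2k}\ge 2S_k$.

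In your framework this amounts to conjugating not $f$ but the step function $h$ with $h^{(k)}(0)=S_k$, i.e.\ taking $g=\mathrm{tow}\circ h\circ\log^\star$. Step 3 then disappears. Your additive trick is still useful at the end. The paper's step function is only nondecreasing, so strictly speaking it is not in $\mathcal F$. Replacing $g(x)$ by $x+g(x)$ gives $g(x+1)\ge g(x)+1$ and changes $g^{(k)}(0)$ by at most a factor $2$.
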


\begin{proof}
    Let $f\in\mathcal F$, define for each $i\in\N$ the partial sums $A_i = f^{(i)}(0)$ and $S_i =\sum_{j=1}^i A_j$.
    Then define the sequence
    
    $$\beta_i = {
    \left.
    \begin{array}{c}
        2^{2^{\iddots^{2}}}
    \end{array}
    \right\}
    \text{$S_i$ times}}$$
    
    and the function $g=\Phi_2(f)$ as $g(\beta_i) = \beta_{i+1}$, $g(x) = \beta_{i+1}$ for all $x\in[\beta_i,\beta_{i+1})$, and  $g(x) = \beta_1$ for all $x\in [0,\beta_1)$.
By definition it holds that $g^{(i)}(0)\in \Theta(\beta_i)$. As $f(x+1)\ge x+1$, the sum $\sum_{i=1}^k g^{(i)}(0)$ is dominated by the last term, and therefore if $k$ is the largest integer for which this sum is at most~$n$, we have $g^{(k)}(0)\in \Theta(n)$. Then  $\log^\star(\beta_k)$ is in $\Theta(\log^\star n)$, and by definition of $\beta_k$
\[
S_{k} = \sum_{i=1}^{k}f^{(i)}(0)\le \log^\star n.
\]
Therefore, by Observation~\ref{thm:general_complexity} we conclude that $k\in \Theta(T_f(\log^\star n))$, from which we conclude that the round complexity of the \problem $\Pi_{g}$ is $\Theta\left(T_f(\log^\star n)\right)$.
\end{proof}

By a direct application of \Cref{thm:nsandlogs} and \Cref{thm:translator_two}, we obtain the following. 

\begin{corollary}\label{cor:translog2}
      The following statement holds.
    \begin{enumerate}
        \item For every real $s>2$, there exists a function $f_s\colon\N\to\N$ such that the round complexity of the \problem $\Pi_{f_s}$ is $\Theta\left((\log^\star n)^{1/s}\right)$.
        \item For every real $\gamma>1$, there exists a function $f_\gamma\colon\N\to\N$ such that the round complexity of the \problem $\Pi_{f_\gamma}$ is $\Theta(\log^{\gamma}\log^\star n)$. 

    \end{enumerate}
\end{corollary}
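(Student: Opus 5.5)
The corollary follows by composing the two constructions already available. We take the functions $f_s$ and $f_\gamma$ from \Cref{thm:nsandlogs} and apply the operator $\Phi_2$ of \Cref{thm:translator_two} to each of them. The resulting \problem $\Pi_{\Phi_2(f_s)}$, respectively $\Pi_{\Phi_2(f_\gamma)}$, should then have round complexity $\Theta(T_{f_s}(\log^\star n))=\Theta((\log^\star n)^{1/s})$, respectively $\Theta(T_{f_\gamma}(\log^\star n))=\Theta(\log^\gamma\log^\star n)$. The functions named in the statement are these translated functions.

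First we check the hypotheses of \Cref{thm:translator_two}. We need $f_s,f_\gamma\in\mathcal F$, i.e., each is increasing, $f(x)\ge x$, and $f(x+1)\ge f(x)+1$. The functions from \Cref{prop:alphaj} have the form $f(x)=x+g(x)$, where $g$ is a nondecreasing step function. The step heights $\alpha_j-\alpha_{j-1}$ are nondecreasing by the convexity condition on $(\alpha_j)$, so $g$ is nondecreasing and $f(x+1)-f(x)=1+g(x+1)-g(x)\ge 1$. The facts $f(x)\ge x$ and monotonicity are immediate.

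Next we check the complexity hypothesis $T_f(n)\in\omega(\log n)$. For $\gamma>1$ we have $\Theta(\log^\gamma n)=\omega(\log n)$. For $s>2$ we have $\Theta(n^{1/s})=\omega(\log n)$.

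Finally we substitute. Because both $T_{f_s}$ and $T_{f_\gamma}$ are given only up to $\Theta$, we must evaluate them at $\log^\star n$ and confirm that the asymptotic class is preserved. Explicit bounds of the form $c_1h(m)\le T_f(m)\le c_2h(m)$ holding for all large $m$ transfer directly with $m=\log^\star n$, since $\log^\star n\to\infty$.

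The main obstacle is confirming that the $\Theta$-bounds on $T_{f_s}$ and $T_{f_\gamma}$ hold uniformly for all sufficiently large $m$, including the small argument $m=\log^\star n$, rather than only along some subsequence. This matters because \Cref{thm:translator_two} evaluates $T_f$ at a tiny argument. The bounds come from $\sum_{j}(k-j)\alpha_j\in\Theta(k^s)$ and $\Theta(k^{2-2/\gamma}2^{k^{1/\gamma}})$. These are monotone in $k$ with constant-ratio consecutive growth, so inverting them at an arbitrary $m$ gives $\Theta(m^{1/s})$ and $\Theta(\log^\gamma m)$ for every large $m$. That is what is needed.
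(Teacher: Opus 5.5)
Your proposal is correct and matches the paper's argument exactly: the paper obtains the corollary by applying the translator $\Phi_2$ of \Cref{thm:translator_two} to the functions $f_s$ and $f_\gamma$ of \Cref{thm:nsandlogs}, and your checks (membership in $\mathcal F$, $T_f\in\omega(\log n)$, and substituting $m=\log^\star n$) spell out what the paper leaves implicit in the phrase ``direct application''. The uniformity issue you single out as the main obstacle is not really an obstacle: a $\Theta$-bound holds by definition for all sufficiently large arguments, and $\log^\star n\to\infty$.
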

%
%
\section{Refined Complexities Above $\Omega(\sqrt{n})$}
\label{sec:refined-gaps}

In this section we present a variant of the \problem to obtain new complexities for locally checkable problems, of the form $\Theta( n^\alpha \log^\beta n)$, for $\alpha \in [\frac{1}{2},1)$  and $\beta \geq 1$. This provides a refinement of the complexities of the form $\Theta( n^\alpha)$ already observed in~\cite{schmid2026lcls}. 

The intuition of our construction is the following. To obtain a small complexity $T_f(n)$ for an \problem $\Pi_f$, the associated function $f$ has to grow fast. Conversely, \cref{thm:sqrt_n} show that to obtain a \problem with complexity $\Theta (\sqrt{n})$, we used a slowing growing function, namely $f(x) = x+1$. We therefore face a problem for obtaining a complexity $T_f(n) = \omega(\sqrt{n})$ as the associated function $f:\N\to\N$ has to satisfy $f(x)>x$, hence $f(x) = x+1$ is the minimum that can be achieved under this constraint. 

In order to design a problem with complexity $\omega(\sqrt{n})$, one would like to use a function $f$ satisfying $f(x)= x+o(1)$, but no \problem $\Pi_f$ correspond to such a function. 
To resolve this issue, we borrow an idea from \cite{schmid2026lcls}, which enables to use a particular structure that implicitly translate into using a slowly growing function $f(x)= x+o(1)$. Roughly, we do not count the number of adjacent leaves, but the presence of specific structures in the vicinity. These structures are essentially trees that are associated to  the representation of a partition of an integer.


For the sake of completeness, let us recall some basic elements needed to define the aforementioned structured, following the ideas developed in~\cite{schmid2026lcls}.

\begin{definition}
A \emph{partition} of a natural number $n$ is an ordered sequence of non-necessarily distinct integers 
$\LL = (a_1, a_2, \dots, a_k)$,
where $0<a_1 \leq a_2 \leq \dots \leq a_k$,
satisfying $\sum_{i=1}^k~a_i~=~n$.
We denote by $p(n)$ the number of distinct partitions of $n$. These partitions can be lexicographically ordered, and, for $i\in \{1,...,p(n)\}$, we denote by $\LL_{n,i}$ the $i$-th partition of~$n$. 
\end{definition}

For instance, we have $p(3) = 3$, and $\LL_{3,1} = \{1,1,1\}, \LL_{3,2} = \{1,2\}, \LL_{3,3} = \{3\}$. 


\begin{lemma}[\cite{hardyramanujan1918partitions}]
\label{lem:partitions_estimate}
$
p(n) \sim \frac{1}{4n\sqrt{3}} \exp\!\Bigl(\pi \sqrt{2n/3}\Bigr).
$
\end{lemma}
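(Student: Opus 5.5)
This is the Hardy--Ramanujan asymptotic formula, cited from~\cite{hardyramanujan1918partitions}. The paper uses it as a black box, so the plan below reconstructs the classical circle-method argument rather than anything specific to this paper.

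\textbf{Step 1: generating function and Cauchy integral.} The proof starts from Euler's identity
\[
P(q)=\sum_{n\ge0}p(n)q^n=\prod_{m\ge1}(1-q^m)^{-1}, \qquad |q|<1.
\]
Then $p(n)=\frac{1}{2\pi i}\oint_{|q|=r}P(q)q^{-n-1}\,dq$, and the radius is chosen as $r=e^{-\tau}$ with $\tau=\pi/\sqrt{6n}$. This is the saddle point of $\exp(\pi^2/(6\tau)+n\tau)$, which already predicts the exponent $\pi\sqrt{2n/3}$.

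\textbf{Step 2: behaviour near $q=1$.} Write $q=e^{-z}$. The key input is the modular transformation of the Dedekind eta function:
\[
P(e^{-z})=\sqrt{\tfrac{z}{2\pi}}\;\exp\!\Big(\tfrac{\pi^2}{6z}-\tfrac{z}{24}\Big)\,P\big(e^{-4\pi^2/z}\big).
\]
Near $z\to0$ with $\Re z>0$, the last factor equals $1+O(e^{-4\pi^2\Re(1/z)})$. So on the arc $|\arg q|\le \delta$ the integrand is explicit, and it can be evaluated by a Gaussian (saddle-point) integral.

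\textbf{Step 3: the main term.} Expand around $z=\tau$, writing $z=\tau+i\theta$. The saddle-point integral gives
\[
p(n)\approx \sqrt{\tau/2\pi}\,e^{\pi\sqrt{2n/3}}\big/\sqrt{2\pi\cdot\pi^2/(3\tau^3)}.
\]
Simplifying with $\tau=\pi/\sqrt{6n}$ yields exactly $\frac{1}{4n\sqrt3}e^{\pi\sqrt{2n/3}}$.

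\textbf{Step 4: the minor arcs (main obstacle).} It remains to bound $|P(re^{i\theta})|$ for $|\theta|>\delta$ and show it is exponentially smaller than $P(r)$. I would use
\[
\log|P(re^{i\theta})|-\log P(r)=-\sum_{m\ge1}\frac{r^m}{m}\big(1-\cos m\theta\big),
\]
and bound the right-hand side from above by $-c/\tau$ times a positive function of $\theta$ once $\theta\gg\tau^{3/2}$. This gives the saving needed to absorb the error terms.

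Getting this uniform bound is the delicate part. Hardy and Ramanujan, and Rademacher after them, handle it by placing expansions near every root of unity via Farey arcs. For the first-order asymptotic alone, the cruder single-arc estimate above suffices.
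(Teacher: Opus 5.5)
The paper does not prove this lemma; it simply imports it from Hardy--Ramanujan, so your circle-method reconstruction stands on its own. Steps 1--3 are correct. The saddle point is $\tau=\pi/\sqrt{6n}$, with value $\pi\sqrt{2n/3}$. Your form of the eta transformation is right. The constant also checks out: $\sqrt{\tau/2\pi}\,\big/\sqrt{2\pi\cdot\pi^2/(3\tau^3)}=\sqrt3\,\tau^2/(2\pi^2)=1/(4\sqrt3\,n)$.

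The gap is in Step 4, the step you yourself flag as essential: the displayed identity is false. Since $\log P(q)=\sum_{m\ge1}\sum_{k\ge1}q^{mk}/k$, your single sum
$\sum_{m\ge1}\frac{r^m}{m}(1-\cos m\theta)=\log\frac{|1-re^{i\theta}|}{1-r}$
only accounts for the factor $(1-q)^{-1}$. It is at most $\log\frac{2}{1-r}=O(\log n)$, so it can never be bounded below by anything of order $1/\tau\asymp\sqrt n$. Taken at face value, it only gives $|P(re^{i\theta})|/P(r)\le(1-r)/|1-re^{i\theta}|$. For $\theta$ of order $1$ this ratio is $\asymp\tau\asymp n^{-1/2}$, and for $|\theta|$ near $\delta\ll\tau$ it is essentially $1$. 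That is not enough. Since $P(r)r^{-n}\asymp n^{-1/4}e^{\pi\sqrt{2n/3}}$ while the target is $\asymp n^{-1}e^{\pi\sqrt{2n/3}}$, you need $|P(re^{i\theta})|/P(r)=o(n^{-3/4})$ uniformly on the minor arcs.

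The repair is to use the double sum and discard the nonnegative terms with $k\ge2$:
\[
\log\frac{|P(re^{i\theta})|}{P(r)}=-\sum_{m,k\ge1}\frac{r^{mk}}{k}\bigl(1-\cos mk\theta\bigr)\le-\sum_{m\ge1}r^m(1-\cos m\theta)=-\frac{r(1+r)(1-\cos\theta)}{(1-r)\,|1-re^{i\theta}|^2}.
\]
The right-hand side is at most $-c\min\bigl(\theta^2/\tau^3,\,1/\tau\bigr)$ for $|\theta|\le\pi$. This is super-polynomially small once $|\theta|\ge\delta$ with $\delta/(\tau^{3/2}\sqrt{\log n})\to\infty$.

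You should also say explicitly that the major arc shrinks with $n$. First, the claim $P(e^{-4\pi^2/z})=1+o(1)$ needs $\Re(1/z)=\tau/(\tau^2+\theta^2)\to\infty$, which fails on an arc of fixed width. Second, the Gaussian approximation in Step 3 needs $\delta\ll\tau^{4/3}$, so that the cubic Taylor term $O(\tau^{-4}\delta^3)$ vanishes. Any $\delta$ in the window $\tau^{3/2}\sqrt{\log n}\ll\delta\ll\tau^{4/3}$ makes Steps 2--4 fit together; for example, $\delta=\tau^{7/5}$ works. With that choice, your closing remark is right: a single major arc at $q=1$ suffices for the first-order asymptotic.
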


We now formalize the tree structure from \cite{schmid2026lcls} that we use in our lower bound construction.

\begin{definition}
    Let $\LL_{n,i} = \{a_1, a_2, \dots, a_k\}$ be the $i$th partition of $n\geq 1$. The  \emph{partition tree} $\TT_{n,i}$ associated to this partition is a rooted tree, whose root  $v$ is referred to as the \emph{head} of the tree. The head $v$ has a child $u_j$ for every $j\in\{1,\dots,k\}$, and the node $u_j$ has $a_j - 1\geq 0$ children, which are leaves of $\TT_{n,i}$.
\end{definition}

For instance, the $p(3) = 3$ partitions yields the three trees 
in Figure~\ref{fig:example-partition-tree-n-equal-3}.
Note that every node $v$ can decide in two rounds whether it is the head of a partition tree $\TT_{n,i}$.  

\begin{figure}[tb]
\begin{center}
\begin{tikzpicture}
   \tikzstyle{circlenode}=[draw,circle,minimum size=70pt,inner sep=0pt]
   \tikzstyle{hugenode}=[draw,dotted,circle,minimum size=100pt,fill=black!20, inner sep=0pt]
    \tikzstyle{whitenode}=[draw,circle,fill=white,minimum size=20pt,inner sep=0pt]
    \tikzstyle{greennode}=[draw=red,circle=red,fill=white,minimum size=20pt,inner sep=0pt]
    \tikzstyle{namenode}=[draw=white,circle=white,fill=white,minimum size=15pt,inner sep=0pt]
    \tikzstyle{nonode}=[draw=white,circle=red,fill=white,minimum size=12pt,inner sep=0pt]


    \draw (0,0) node[whitenode] (a1) {};
    \draw (-1,-1) node[whitenode] (b1) {};
    \draw (0,-1) node[whitenode] (c1) {};
    \draw (1,-1) node[whitenode] (d1) {};

\node[] at (0,-3) {$\mathcal{T}_{3,1}$};

\draw (a1) edge node {} (b1);
\draw (a1) edge node {} (c1);
\draw (a1) edge node {} (d1);

    \draw (5,0) node[whitenode] (a2) {};
    \draw (6,-1) node[whitenode] (b2) {};
    \draw (4,-1) node[whitenode] (c2) {};
    \draw (6,-2) node[whitenode] (d2) {};

\node[] at (5,-3) {$\mathcal{T}_{3,2}$};

\draw (a2) edge node {} (b2);
\draw (a2) edge node {} (c2);
\draw (d2) edge node {} (b2);

    \draw (10,0) node[whitenode] (a3) {};
    \draw (10,-1) node[whitenode] (b3) {};
    \draw (9,-2) node[whitenode] (c3) {};
    \draw (11,-2) node[whitenode] (d3) {};

\node[] at (10,-3) {$\mathcal{T}_{3,3}$};

\draw (a3) edge node {} (b3);
\draw (c3) edge node {} (b3);
\draw (b3) edge node {} (d3);

\end{tikzpicture}
\end{center}
\caption{The three trees of depth at most 2 that correspond to the three partitions $(1,1,1),(1,2)$, and $(3)$ of~$n=3$.}
\label{fig:example-partition-tree-n-equal-3}
\end{figure}

We now define an ordering of partition trees, which express how we expect to find a ``chain of partition trees'' in the graph. 

\begin{definition}
\label{def:ordered}
    Let $\alpha \in [\frac{1}{2},1)$ and $\beta \geq 1$ be reals. Let $s := \frac{2\alpha-1}{1-\alpha}$, and $q:= \beta(s+2)$.
    A node $v$ is $(\alpha,\beta)$-\emph{ordered} if the following three conditions hold:
    \begin{enumerate}
        \item $v$ is the head of a partition tree $T_{x,i}$ for some positive integers $i$ and~$x$;
        
        
        \item if $i< \lfloor x^s\log^q x \rfloor$ then $v$ has a neighbor $w$ that is the head of the partition tree $T_{x,i+1}$;
        
        \item if $i = \lfloor x^s\log^q x \rfloor$ then $v$ has a neighbor $w$ that is the head of the partition tree $T_{(x+1),1}$.
    \end{enumerate}
    The node $w$ appearing in the second and third conditions above is called the \emph{successor} of~$v$.
\end{definition}

For every $\alpha\in[\nicefrac12,1)$ and $\beta\geq 1$, we can  now define a new problem $\Pi_{\alpha,\beta}$ that extends the notion of \problem. We call \problemp such a problem $\Pi_{\alpha,\beta}$.

\begin{algorithmbox}{\textsc{\problemp $\Pi_{\alpha,\beta}=(L,\varphi_{\alpha,\beta})$}}
\label{def:increasing_degree_problem}
		
\textbf{Labeling at node $v$:} 
\[
\ell(v)=(\ell_{in}(v),\ell_{out}(v))\in L=\{0,1\}\times \{0,1\} =L_{in}\times L_{out}
\]

\textbf{Correctness predicate $\varphi_{\alpha,\beta}$ at node $v$:} 
\begin{align*}
\varphi_{\alpha,\beta}  = & 
 \Big((\mbox{$v$  not ordered}) \land (\ell_{out}(v)=\ell_{in}(v))\Big) \;\; \lor \\
& 
\Big((\mbox{$v$ ordered}) \land \big(\exists w\in N(v) :  (\mbox{$w$ successor of $v$}) 
\land  (\ell_{out}(v)=\ell_{out}(w))\big)\Big)
\end{align*}		
\end{algorithmbox}

For each \problemp $\Pi_{\alpha,\beta}$, the ordering condition, and the successor nodes are defined with respect to the parameters $\alpha$ and $\beta$ of the problem. 
Similar to an \problem, a \problemp asks every node to merely return its input, except for the nodes which are ordered, which must output the same output as their successors.
Any problem $\Pi_{\alpha,\beta}$ is locally checkable. Indeed, given a possible solution, any node can check in two rounds whether it is the head of a partition tree, and if so whether there it has a successor. The round complexity (in unbounded degree graphs) of a \problemp is the following.

\begin{theorem}\label{thm:complexity_partitions}
    For every 
    $\alpha \in [\frac{1}{2},1)$ 
    and 
    $\beta \geq 1$, the \problemp $\Pi_{\alpha,\beta}$ has round complexity 
    $\Theta(n^\alpha \log^\beta n)$
    in $n$-node graphs. The lower bound 
    $\Omega(n^\alpha \log^\beta n)$
    also holds for trees.
\end{theorem}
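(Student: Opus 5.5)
We follow the template of \Cref{thm:logloground}. The algorithm is Algorithm~\ref{alg:generic-code2} with ``balanced'' replaced by ``not ordered'' and ``heavy neighbor'' replaced by ``successor''. In two rounds a node learns whether it is a head of some $\TT_{x,i}$, the values $x,i$, and which neighbors are heads of $\TT_{x,i+1}$ or $\TT_{x+1,1}$. A node that is not ordered outputs its input and broadcasts it. An ordered node waits for a value from a successor and copies it. Correctness is the same induction along the chain of successors as in \Cref{thm:logloground}. The running time is therefore bounded by the length $k$ of the longest \emph{successor chain} $v_0,v_1,\dots,v_k$, where each $v_{j}$ is ordered for $j<k$ and $v_{j+1}$ is a successor of $v_j$. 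No knowledge of $n$ or identifiers is used.

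\textbf{Upper bound.} This is the main step: bound $k$ in terms of $n$. Along a chain, the pair $(x,i)$ strictly increases lexicographically, so consecutive heads are distinct, and in fact all pairs along the chain are distinct. We would like the partition trees hanging at distinct heads to be disjoint, so that their sizes add up; this is the main obstacle. A head of $\TT_{x,i}$ has, as its non-spine neighbors, exactly the level-one nodes of its tree. We intend to argue, using the rigid depth-2 structure (children $u_j$ whose remaining neighbors are exactly $a_j-1$ leaves), that the tree hanging at a head is determined and shares at most $O(1)$ nodes with those of other heads, for instance the spine neighbors themselves. If full disjointness fails, we charge each tree's $x$ nodes minus $O(1)$ shared nodes, which is enough. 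Granting this, the $j$-th head of the chain contributes $\Omega(x_j)$ nodes. Let $m(x)=\lfloor x^s\log^q x\rfloor$ denote the number of consecutive indices $i$ used at size $x$; note that $m(x)\le p(x)$ for large $x$ by Lemma~\ref{lem:partitions_estimate}, since $p$ grows like $e^{\Theta(\sqrt{x})}$, so the definition makes sense. A chain of length $k$ minimises its total size when it starts at small $x$, so $n \ge \sum_{x\le X} x\, m(x)=\Theta(X^{s+2}\log^q X)$ and $k\le \sum_{x\le X+1} m(x)=\Theta(X^{s+1}\log^q X)$. Inverting gives $X=\Theta\big((n/\log^q n)^{1/(s+2)}\big)$, hence
\[
k=O\big(n^{(s+1)/(s+2)}\log^{q/(s+2)} n\big)=O(n^\alpha\log^\beta n),
\]
because $(s+1)/(s+2)=\alpha$ and $q/(s+2)=\beta$. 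A chain starting mid-way at some large $x_0$ only pays more per step, and the same computation bounds it.

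\textbf{Lower bound (trees).} We build the spine $v_0,\dots,v_k$, where $v_j$ heads the $j$-th partition tree in the order $\TT_{1,1},\dots,\TT_{1,m(1)},\TT_{2,1},\dots$, attaching each tree to its spine node. This ordering is valid because $m(x)\le p(x)$, and small $x$ where $m(x)$ might exceed $p(x)$ affect only constants. The last spine node $v_k$ is made non-ordered. We must check that no spurious structure is created: spine edges add one extra child to each head. So a head must recognise its tree after excluding neighbors that are themselves heads; we would fix this convention in the definition of ``head'', or attach trees through the leaves of degree one, so that $v_j$ is correctly ordered. The size is $\Theta(X^{s+2}\log^q X)=n$ and the length is $k=\Theta(n^\alpha\log^\beta n)$. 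The indistinguishability argument of \Cref{thm:logloground} then applies: $v_0$ must output $\ell_{in}(v_k)$, which lies at distance $k$.
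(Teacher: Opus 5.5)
Your proposal is correct and follows the paper's proof: the same adapted waiting algorithm, the same lower-bound tree whose spine runs through heads of consecutive partition trees $\TT_{x,i}$ with $m(x)=\lfloor x^s\log^q x\rfloor$ indices per size $x$, and the same computation giving $n^{(s+1)/(s+2)}\log^{q/(s+2)}n=n^\alpha\log^\beta n$. Your upper bound is a more explicit version of what the paper only asserts, namely that the constructed tree contains the longest possible chain: you use distinct labels along a successor chain, vertex-disjoint hanging trees, and the fact that the cheapest $k$ labels form an initial segment of the canonical order. The disjointness step you leave as intended does go through, because a child of a head has all its other neighbours being leaves, and so it has a unique non-leaf neighbour.
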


\begin{proof}
    Given $\alpha$ and $\beta$ as hypothesis, let
    $s = \frac{2\alpha-1}{1-\alpha}$ and $q= \beta(s+2)$
    as in Definition \ref{def:ordered}. For every positive integer~$k$, let us consider a path $P = (P_1, P_2, \dots, P_k)$ composed of a series of sub-paths $P_1,\dots, P_k$ where 
    \[
    P_i = (v_{i, 1}\ ,\ \dots\ ,\ v_{i, \lfloor i^s\log^q i \rfloor}).
    \]
    In other words, the path $P$ is subdivided in $k$ ``blocks'' $P_1,\dots,P_k$, where the $i$-th block has length  $\lfloor i^s\log^q i \rfloor$. 
    To prove the desired lower bound, we construct a tree $T$ by attaching partition trees to each of the nodes of~$P$. For every $i \in \{1,\dots,k\}$, and every $j\in \{1,\dots,\lfloor i^s\log^q i \rfloor\}$
    we attach the tree $\TT_{i,j}$ to $v_{i,j}$, which becomes the head of $\TT_{i,j}$.
    For every $i\in\mathbb N$, and every $s,q\in\mathbb R$, the tree
    $\TT_{i, j}$
    exists for all $j\leq \lfloor \log^q i \rfloor$, as $i^s\log^{q}(i) \leq p(i)$ by Lemma~\ref{lem:partitions_estimate}. So the tree $T$ is well defined.
    We call $P$ the \emph{spine} of $T$.
    
    Analogously to the proof of \Cref{thm:logloground}, the length $d$ of the spine is a lower bound for the problem $\Pi_{\alpha,\beta}$. We have 
    \[
    d:=|P| = \sum_{i=1}^k i^s\log^q i.
    \]
    Since the function $f(t) = t^s\log^q (t)$ is monotonically increasing in the interval  $[1,\infty)$, for the integral test for convergence we have 
    \[
        \int_1^k f(t)\ dt 
        \leq \sum_{i=1}^k i^s\log^q i
        \leq \int_1^k f(t)\ dt +f(k), 
    \]
    from which it follows that 
    \[
        d 
        = \int_1^k t^s\log^q (t)\ dt + O(k^s\log^q k)
        = \Theta(k^{s+1} \log^q k).
    \]
    On the other hand, the number of nodes of $T$ is
    $n = \sum_{i=1}^k i^{s+1} \log^q i$.
    As a consequence, 
    \[
        n = \int_1^k t^{s+1} \log^q (t)\ dt + O(k^{s+1}\log^q k)
        = \Theta(k^{s+2} \log^q k), 
    \]
    from which it follows that $k = \Theta \big(\frac{n}{\log^q n}\big)^{\frac{1}{s+2}}$. Plugging this latter value in the value of~$d$, we get 
    \[
        d = \Theta \Bigg[\frac
            {n^{\frac
                {s+1}
                {s+2}
                }
            }
            {(\log n)^{\frac
                {q}
                {s+2}
            }}
        \cdot 
        \log^q \Bigg(
            \frac
                {n^{\frac{s+1}{s+2}}}
                {(\log n)^{\frac{q}{s+2}}}
            \Bigg)
        \Bigg]
    = \Theta \bigg(
        C_{s,q}
        \cdot n^{\frac{s+1}{s+2}}
        \cdot \log^{\frac{q}{s+2}} n 
    \bigg)
    \]
    Where $C_{s,q}$ is a value that depends only on $s$ and $q$. Using the values of $\alpha$ and $\beta$, and the fact that both are fixed constants, we eventually get 
    \[
        d = \Theta \Big(
            \frac{\alpha\beta}{1-\alpha}
            \cdot n^\alpha
            \cdot \log^\beta n
            \Big)
        = \Theta \Big(
            n^\alpha
            \log^\beta n
            \Big).
    \]
    The upper bound
    holds because of the construction of the tree $T$ as a $n$-node graph. Indeed it includes $P$, the longest possible path of ordered nodes in a graph with $n$ nodes. Then using an analogue version of \Cref{alg:generic-code} modified for $\Pi_{\alpha,\beta}$ on any graph $G$, the round complexity cannot exceed the one needed for the hard instance $T$.
\end{proof}

\section{Conclusion}
\label{sec:conclusion}

In this paper, we have shown that the gaps in the complexity landscape of LCL problems in the \LOCAL\ model can be filled up with infinitely many different asymptotic complexities of explicit problems whenever the constraint on the maximum degree of the input graphs is discarded. The next step for a better understanding of the complexity landscape of locally checkable problems in the \LOCAL\ model is to prove or disprove the existence of a mechanical way of solving Equation~\eqref{eq:generic-simple} in~$f$ for every given (computable) round complexity~$t(n)$, that is, to explicitly construct $f$ such that the largest integer $k$ satisfying $f^{(k)}(0)\leq n$ is $k=t(n)$. 

As a concrete example, what is an explicit expression of a function $f$ such that the associated \problem $\Pi_f$ has round complexity equal to the inverse Ackermann function? Recall that the inverse Ackermann function is defined as follows. Let $\alpha_1(n)=\lceil n/2 \rceil$, and, for every $k\geq 2$, let $\alpha_k(n)$ be the number of times we have to apply the function $\alpha_{k-1}$ starting from $n$ until we reach a value at most~1, i.e., $\alpha_k(1)=0$, and,  
$
\alpha_k(n)=1+\alpha_k(\alpha_{k-1}(n))
$
for $n\geq 2$.
In particular, $\alpha_2(n)=\lceil\log_2n\rceil$, $\alpha_3(n)=\log^\star n$, and $\alpha_4(n)=\log^\sharp n$. The inverse Ackermann function $\alpha(n)$ assigns to each integer $n$ the smallest $k$ for which $\alpha_k(n)\leq 3$. 
To get a complexity $\alpha_2(n)$, we merely used $\Pi_{f_2}$ with $f_2(x)=2x$, whereas, to get a complexity  $\alpha_3(n)$, we used $\Pi_{f_3}$ with $f_3(x)=2^x$, and, to get a complexity $\alpha_4(n)$, we used $\Pi_{f_4}$ with
\[
    f_4(x) = {
    \left.
    \begin{array}{c}      2^{2^{\iddots^{2}}}
    \end{array}
    \right\}
    \text{$x$ times.}}
\]
More generally, for every index $k\geq 2$, using $\Pi_{f_k}$ with  $f_k(n)=\alpha_{k-1}^{-1}(n)$ enables to define a locally checkable problem with  complexity~$\alpha_k(n)$ in the \LOCAL\ model.  But what is the explicit expression of a function $f_{ack}:\mathbb{N}\to\mathbb{N}$ such that the associated \problem $\Pi_{f_{ack}}$ has complexity~$\alpha(n)$ in the \LOCAL\ model? The same question could be asked for the inverses  of other rapidly growing functions such as the Friedman's TREE function. 

And of course, the general question remains whether \emph{all} round complexities can be obtained in arbitrary graphs, as it was shown~\cite{bousquet2024local} to be the case in the restricted case of trees?

\bibliographystyle{plain}
\bibliography{biblio}
\end{document}